\DeclareMathOperator{\Gain}{g}
\DeclareMathOperator{\tGain}{\tilde{\Gain}}
\DeclareMathOperator{\approxGain}{\tilde{\tilde{g}}}
\DeclareMathOperator{\allGain}{(\Gain_i)_{i\in\Pi}}
\newcommand{\allTar}{(F_i)_{i\in\Pi}}
\newcommand{\initG}{(\mathcal{G},v_0)}
\newcommand{\initX}{(\mathcal{X},x_0)}
\def\qGain#1{\tilde{\Gain}_{#1}}
\DeclareMathOperator{\Goal}{Goal}
\DeclareMathOperator{\Last}{Last}
\DeclareMathOperator{\Plays}{Plays}
\DeclareMathOperator{\Hist}{Hist}
\DeclareMathOperator{\Succ}{Succ}
\DeclareMathOperator{\Occ}{Occ}
\DeclareMathOperator{\Wit}{Wit}
\DeclareMathOperator{\Ar}{A}
\DeclareMathOperator{\tAr}{\tilde{\Ar}}
\DeclareMathOperator{\eAr}{X}
\DeclareMathOperator{\Aut}{\mathcal{A}}
\newcommand{\Clock}{C}
\newcommand{\Guard}{G}
\newcommand{\Loc}{L}
\newcommand{\Trans}{\rightarrow}
\newcommand{\vSet}{\Clock_V}
\newcommand{\timeAr}{\Ar_{\Aut}}
\newcommand{\Ordre}{\diamond}
\DeclareMathOperator{\Rel}{\mathbf{R}}
\def\Class#1{[#1]}
\def\outcome#1#2{\langle #1 \rangle_{#2}}
\newcommand{\initQ}{(\tilde{\mathcal{G}},\Class{v_0})}
\begin{document}
\title{On Subgame Perfect Equilibria in Turn-Based Reachability Timed Games\thanks{This work is supported by the ARC project ``Non-Zero Sum Game Graphs: Applications to Reactive Synthesis and Beyond" (F\'ed\'eration Wallonie-Bruxelles).}}
%
%
\author{Thomas Brihaye\inst{1} \and
Aline Goeminne\inst{1,2}}
\authorrunning{T. Brihaye and A. Goeminne}
%
 
\institute{Universit\'e de Mons (UMONS), Mons Belgium  \email{\{thomas.brihaye,aline.goeminne\}@umons.ac.be}\and
 Universit\'e libre de Bruxelles (ULB), Brussels, Belgium
}

\maketitle              
\begin{abstract}
We study multiplayer turn-based timed games with reachability objectives. In particular, we are interested in the notion of subgame perfect equilibrium (SPE). We prove that deciding the constrained existence of an SPE in this setting is EXPTIME-complete.

\keywords{multiplayer turn-based timed games  \and reachability objectives \and subgame perfect equilibria \and constrained existence problem}
\end{abstract}
%
%
%
%
%


\section{Introduction}

\subsubsection{Games} In the context of \emph{reactive systems},
\emph{two-player zero-sum games played on graphs} are commonly used to
model the purely antagonistic interactions between a system and its
environment~\cite{PnueliR89}. The system and the environment are the
two players of a game played on a graph whose vertices represent the
configurations. Finding how the system can ensure the achievement of
his objective amounts to finding, if it exists, a \emph{winning
  strategy} for the system.

When modeling complex systems with several agents whose objectives are
not necessarily antagonistic, the two-player zero-sum framework is too
restrictive and we rather rely on \emph{multiplayer non zero-sum
  games}. In this setting, the notion of winning strategy is replaced
by various notions of \emph{equilibria} including the famous concept
of \emph{Nash equilibrium} (NE)~\cite{nash50}. When considering games
played on graphs, the notion of \emph{subgame perfect equilibrium}
(SPE) is often preferred to the classical Nash
equilibrium~\cite{osbornebook}. Indeed, Nash equilibrium does not take
into account the sequential structure of the game and may allow
irrational behaviors in some subgames.

\subsubsection{Timed games}
\emph{Timed automata}~\cite{RajeevAlur.1994} is now a well established
model for complex systems including real time features. Timed automata
have been naturally extended into two-player zero-sum \emph{timed
  games}~\cite{AM99,CDFLL05,BCDFLL07,JT07}. Multiplayer non zero-sum extensions
have also been considered \cite{BouyerBM10,brenguier:tel-00827027}. In
these models both time and multiplayer aspects coexist. In this non
zero-sum timed framework, the main focus has been on NE, and, to our
knowledge, not on SPE.

\subsubsection*{Main contributions and organization of the paper.}
In this paper, we consider \emph{multiplayer, non zero-sum, turn-based
  timed games with reachability objectives} together with the concept
of SPE. We focus on the constrained existence problem (for SPE): given
a timed game, we want to decide whether there exists an SPE where some
players have to win and some other ones have to lose. The main result
of this paper is a proof that the SPE constrained existence problem is
EXPTIME-complete for reachability timed games. Let us notice that the
NE constrained existence problem for reachability timed games is also
EXPTIME-complete~\cite{brenguier:tel-00827027}. This may look
surprising as often, there is a complexity jump when going from NE to
SPE, for example the constrained existence problem on qualitative
reachability game is NP-complete for NE~\cite{CFGR16} and
PSPACE-complete for SPE~\cite{BrihayeBGR18}. Intuitively, the
complexity jump is avoided because the exponential blow up due to the
passage from SPE to NE is somehow absorbed by the classical
exponential blow up due to the classical region graph used for the
analysis of timed system.

In order to obtain an EXPTIME algorithm, we proceed in different
steps. In the first step, we prove that the game variant of the
\emph{classical region graph} is a good abstraction for the SPE
constrained existence problem. In fact, we identify conditions on
\emph{bisimulations} under which the study of SPE of a given
(potentially infinite game) can be reduced to the study of its
quotient. This is done in Section~\ref{section:SPE_game_quotient} for
(untimed) games with general objectives.  In
Section~\ref{section:reach}, we then focus on (untimed) finite
reachability game and provide an EXPTIME algorithm to solve the
constrained existence problem. Proving this result may look
surprising, as we already know from~\cite{BrihayeBGR18} that this
problem is indeed PSPACE-complete for (untimed) finite games. However
the PSPACE algorithm provided in~\cite{BrihayeBGR18} did not allow us
to obtain the EXPTIME algorithm for timed games. The latter EXPTIME
algorithm is discussed in Section~\ref{section:timedGames}.

\subsubsection*{Related works}
There are many results on SPEs played on graphs, we refer the reader
to \cite{Bruyere17} for a survey and an extended bibliography. Here we
focus on the results directly related to our contributions. The
constrained existence of SPEs is studied in finite multiplayer
turn-based games with different kinds of objectives, for example:
(qualitative) reachability and safety objectives~\cite{BrihayeBGR18},
$\omega$-regular winning conditions~\cite{Ummels06}, quantitative
reachability objectives~\cite{BrihayeBGRB19},... In~\cite{BouyerBM10},
they prove that the constrained existence problem for Nash equilibria
in concurrent timed games with reachability objectives is
EXPTIME-complete. This same problem in the same setting is studied
in~\cite{brenguier:tel-00827027} with others qualitative objectives.


\section{Preliminaries}
\subsubsection*{Transition systems, bisimulations and quotients} A \emph{transition system} is a tuple $T = ( \Sigma, V, E)$ where \emph{(i)}  $\Sigma$ is a finite alphabet; \emph{(ii)} $V$ a set of \emph{states} (also called \emph{vertices}) and \emph{(iii)} $E \subseteq V \times \Sigma \times V$ a set of \emph{transitions} (also called \emph{edges}). 
To ease the notation, an edge $(v_1,a,v_2)\in E$ is sometimes denoted by $v_1 \xrightarrow{a} v_2$.
Notice that $V$ may be uncontable. We said that the transition system is \emph{finite} if $V$ and $E$ are finite.\\

Given two transition systems on the same alphabet $T_1 = (\Sigma, V_1, E_1)$ and $T_2 = (\Sigma, V_2, E_2)$, a \emph{simulation of $T_1$ by $T_2$} is a binary relation $\Rel \subseteq V_1 \times V_2$ which satisfies the following conditions: \emph{(i)} $\forall v_1, v'_1 \in V_1$, $\forall v_2 \in V_2$ and $\forall a \in \Sigma$: $((v_1,v_2) \in \Rel \text{ and } v_1 \xrightarrow{a}_1 v'_1) \Rightarrow (\exists v'_2 \in V_2, v_2 \xrightarrow{a}_2 v'_2 \text{ and } (v'_1,v'_2) \in \Rel)$ and \emph{(ii)} for each $v_1 \in V_1$ there exists $v_2 \in V_2$ such that $(v_1, v_2 )\in \Rel$. We say that $T_2$ \emph{simulates} $T_1$. It implies that any transition $v_1 \xrightarrow{a}_1 v'_1$ in $T_1$ is simulated by a corresponding transition $v_2 \xrightarrow{a}_2 v'_2$ in $T_2$.

Given two transition systems on the same alphabet $T_1 = (\Sigma, V_1,
E_1)$ and $T_2 = (\Sigma, V_2, E_2)$, a \emph{bisimulation between
  $T_1$ and $T_2$} is a binary relation $\Rel \subseteq V_1 \times
V_2$ such that $\Rel$ is a simulation of $T_1$ by $T_2$ and the
converse relation $\Rel^{-1}$ is a simulation of $T_2$ by $T_1$ where
$\Rel^{-1} = \{(v_2,v_1) \in V_2 \times V_1 \mid (v_1,v_2) \in
\Rel\}$. When $\Rel$ is a bisimulation between two transition systems,
we write $\beta$ instead of $\Rel$. If $T = (\Sigma, V, E)$ is a
transition system, a bisimulation on $V \times V$ is called a
bisimulation on $T$.

Given a transition system $T = (\Sigma, V, E)$ and an equivalence
relation $\sim$ on $V$, we define the \emph{quotient of $T$ by
  $\sim$}, denoted by $\tilde{T} = (\Sigma, \tilde{V}, \tilde{E})$, as
follows: \emph{(i)} $\tilde{V} = \{ \Class{v}{\sim} \mid v \in V \}$
where $\Class{v}_{\sim} =\{ v' \in V \mid v \sim v'\}$
and \emph{(ii)} $\Class{v_1}_{\sim} \xrightarrow{a}_{\sim}
\Class{v_2}_{\sim}$ if and only if there exist $v'_1 \in
\Class{v_1}_{\sim}$ and $v'_2 \in \Class{v_2}_{\sim}$ such that $v'_1
\xrightarrow{a} v'_2$. When clear from the context which equivalence
relation is used, we write $[v]$ instead of $\Class{v}_{\sim}$.\\

Given a transition system $T = (\Sigma, V, E)$, a bisimulation $\sim$
on $T$ which is also an equivalence relation is called a
\emph{bisimulation equivalence}. In this context, the following result
holds. 

\begin{lemma}Given a transition system $T$ and a bisimulation equivalence $\sim$, there exists a bisimulation $\sim_q$ between $T$ and its quotient $\tilde{T}$. This bisimulation is given by the function $\sim_q: V \rightarrow \tilde{V}: v \mapsto \Class{v}_{\sim}$\end{lemma}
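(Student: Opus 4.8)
The plan is to verify directly that the graph of $\sim_q$, i.e.\ the relation $\Rel = \{(v,[v]_\sim) \mid v\in V\} \subseteq V \times \tilde{V}$, is a bisimulation between $T$ and $\tilde{T}$. Unfolding the definition, we have to check four things: condition~(ii) for $\Rel$ and for $\Rel^{-1}$ (totality), and condition~(i) for $\Rel$ viewed as a simulation of $T$ by $\tilde T$ and for $\Rel^{-1}$ viewed as a simulation of $\tilde T$ by $T$ (transition transfer). The two totality conditions are immediate: every $v\in V$ is related by $\Rel$ to $[v]_\sim$, and every class $[u]_\sim\in\tilde V$ is related by $\Rel^{-1}$ to any of its representatives, which exist since equivalence classes are nonempty.

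For the transfer condition on $\Rel$ we take $(v,[v]_\sim)\in\Rel$ and a transition $v\xrightarrow{a}v'$ in $T$. Using the witnesses $v\in[v]_\sim$ and $v'\in[v']_\sim$, the definition of the quotient edges immediately gives $[v]_\sim\xrightarrow{a}_\sim[v']_\sim$, and $(v',[v']_\sim)\in\Rel$ holds by construction. So this direction is purely definitional and does not use the bisimulation hypothesis at all.

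The substantive direction is the transfer condition on $\Rel^{-1}$. Here we are given a class $[u]_\sim$, a representative $v\in[u]_\sim$ with $([u]_\sim,v)\in\Rel^{-1}$, and a quotient transition $[u]_\sim\xrightarrow{a}_\sim[w]_\sim$. By definition of the quotient there exist $u_1\in[u]_\sim$ and $u_2\in[w]_\sim$ with $u_1\xrightarrow{a}u_2$ in $T$; but the source of this concrete transition is $u_1$, which need not be the representative $v$ that $\Rel^{-1}$ hands us. This mismatch is exactly where the hypothesis that $\sim$ is a bisimulation equivalence is used: since $\sim$ is symmetric we have $(u_1,v)\in\sim$, and since $\sim$ is a simulation of $T$ by itself, the transition $u_1\xrightarrow{a}u_2$ is matched by some $v'\in V$ with $v\xrightarrow{a}v'$ and $u_2\sim v'$; then $[v']_\sim=[u_2]_\sim=[w]_\sim$, hence $([w]_\sim,v')\in\Rel^{-1}$, as required.

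We expect this last step --- relocating the witnessing transition from an arbitrary representative $u_1$ of the source class to the specific representative $v$ imposed by $\Rel^{-1}$ --- to be the only point that needs any thought; the rest is bookkeeping around the definitions of quotient, simulation and equivalence class. When writing it out carefully, the only things to pay attention to are to invoke the symmetry of $\sim$ explicitly and to note that $u_2\sim v'$ together with $u_2\in[w]_\sim$ forces $v'\in[w]_\sim$.
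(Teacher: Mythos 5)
Your proof is correct: the paper states this lemma without proof (it is a classical fact about bisimulation quotients), and your direct verification is exactly the standard argument one would write out. You correctly isolate the one non-trivial point, namely using that $\sim$ is itself a simulation of $T$ by $T$ to relocate the witnessing transition $u_1 \xrightarrow{a} u_2$ from an arbitrary representative $u_1$ of the source class to the given representative $v$; the remaining conditions are, as you say, definitional bookkeeping.
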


\subsubsection*{Turn-based games} \paragraph{Arenas, plays and histories} An \emph{arena} $\Ar = (\Sigma,V,E, \Pi,(V_i)_{i\in\Pi} )$ is a tuple where \emph{(i)} $T = (\Sigma, V,E)$ is a transition system such that for each $v \in V$, there exists $a \in \Sigma$ and $v'\in V$ such that $(v,a,v')\in E$; \emph{(ii)} $\Pi = \{1, \ldots, n\}$ is a finite set of players and \emph{(iii)} $(V_i)_{i\in\Pi}$ is a partition of $V$ between the players. An arena is finite if its transition sytem $T$ is finite.

A \emph{play} in $\Ar$ is an infinite path in its transition system,
\emph{i.e.,} $\rho = \rho_0\rho_1\ldots \in V^\omega$ is a play if for
each $i \in \mathbb{N}$, there exists $a \in \Sigma$ such that
$(\rho_i,a,\rho_{i+1})\in E$. A \emph{history} $h$ in $\Ar$ can be
defined in the same way but $h = h_0\ldots h_k \in V^*$ for some $k\in
\mathbb{N}$ is a finite path in the transition system. We denote the
set of plays by $\Plays$ and the set of histories by $\Hist$. When it
is necessary, we use the notation $\Plays_{\Ar}$ and $\Hist_{\Ar}$ to
recall the underlying arena $\Ar$. Moreover, the set $\Hist_i$ is the
set of histories such that their last vertex $v$ is a vertex of Player
$i$, \emph{i.e.,} $v \in V_i$. A play (resp. a history) in
$(\mathcal{G},v_0)$ is then a play (resp. a history) in $\mathcal{G}$
starting in $v_0$. The set of such plays (resp. histories) is denoted
by $\Plays(v_0)$ (resp. $\Hist(v_0)$). We also use the notation
$\Hist_i(v_0)$ when these histories end in a vertex $v \in V_i$.


Given a play $\rho \in \Plays$ and $k\in \mathbb{N}$, 
its suffix $\rho_k \rho_{k+1} \ldots $ is denoted by $\rho_{\geq k}$. 
We denote by $\Succ(v) = \{v' | (v,a, v') \in E \text{ for some }a\in \Sigma\}$ the set of successors of $v$, for $v \in V$ , and by $\Succ^*$ the transitive closure of $\Succ$. Given a play $\rho = \rho_0\rho_1\ldots $, the set $\Occ(\rho)=\{v \in V \mid \exists k, \rho_k = v\}$ is the set of vertices \emph{visited} along $\rho$.

\begin{remark}
	\label{rem:1}
When we consider a play in an arena $\Ar = (\Sigma,V,E, \Pi,(V_i)_{i\in\Pi})$,
we do not care about the alphabet letter associated with each edge of the play.
It is the reason why two different infinite paths in $T = (\Sigma, V, E)$
$v_0 \xrightarrow{a}v_1 \xrightarrow{a} \ldots \xrightarrow{a} v_n \xrightarrow{a} \ldots$
and $v_0 \xrightarrow{b}v_1  \xrightarrow{b} \ldots \xrightarrow{b} v_n \xrightarrow{b} \ldots$ correspond to only one play
$\rho = v_0v_1 \ldots v_n \ldots$ in $\Ar$. The same phenomenon appears with finite paths and histories.
We explain later why this is not a problem for our purpose.
\end{remark}

\paragraph{Multiplayer turn-based game} An \emph{(initialized multiplayer Boolean turn-based) game} is a tuple  $(\mathcal{G},v_0) = (\Ar, (\Gain_i)_{i\in\Pi} )$ such that: \emph{(i)} $\Ar = (\Sigma,V,E, \Pi,(V_i)_{i\in\Pi}) $ is an arena; \emph{(ii)} $v_0 \in V$ is the \emph{initial vertex} and \emph{(iii)} for each $i \in \Pi$, $\Gain_i: \Plays \rightarrow \{0,1\}$ is a gain function for Player~$i$. In this setting, each player $i \in \Pi$ is equipped with a set $\Omega_i \subseteq \Plays$ that we call the \emph{objective} of Player~$i$. Thus, for each $i \in \Pi$, for each $\rho \in \Plays$: $\Gain_i(\rho) = 1$ if and only if $\rho \in \Omega_i$. 
If $\Gain_i(\rho) = 1$ (resp. $=0$), we say that Player~$i$ \emph{wins} (resp. \emph{loses}) along $\rho$.
In the sequel of this document, we refer to the notion of initialized multiplayer Boolean turn-based game by the term ``game''. For each $\rho \in \Plays$, we write $\Gain(\rho)= p$ for some $p \in \{0,1\}^{|\Pi|}$ to depict $\Gain_i(\rho) = p_i$ for each $i \in \Pi$.

\paragraph{Strategies and outcomes} Given a game $(\mathcal{G},v_0)$, a \emph{strategy} of Player~$i$ is a function $\sigma_i: \Hist_i(v_0) \rightarrow V$ with the constraint that for each $hv \in \Hist_i(v_0)$, $\sigma_i(hv) \in \Succ(v). $ A play $\rho = \rho_0\rho_1\ldots$ is \emph{consistent} with $\sigma_i$ if for each $\rho_k$ such that $\rho_k \in V_i$, $\rho_{k+1} = \sigma_i(\rho_0\ldots \rho_k)$. A \emph{strategy profile} $\sigma = (\sigma_i)_{i \in \Pi}$ is a tuple of stratgies, one for each player. Given a game $(\mathcal{G},v_0)$ and a strategy profile $\sigma$, there exists a unique play from $v_0$ consistent with each strategy $\sigma_i$. We call this play the \emph{outcome} of $\sigma$ and denote it by $\outcome{\sigma}{v_0}$.

\begin{remark}
	\label{rem:2}
	We follow up Remark~\ref{rem:1}. The objectives we consider
        are of the form $\Omega \subseteq \Plays$. These objectives
        only depend on the sequence of visited states along a play
        (for example: visiting infinitely often a given state)
        regardless the sequence of visited alphabet letters. This is
        why defining the strategy of a player by a choice of the next
        vertex instead of a couple of an alphabet letter and a vertex
        is not a problem.  Actually, in all this paper one may
        consider that the alphabet is $\Sigma = \{a\}$. The reason why
        we allow alphabet on edge is to be able to consider
        \emph{synchronous products} of (timed)
        automata~\cite{Baier2008,RajeevAlur.1994}. In this way, we
        could consider wider class of objectives (see Section~\ref{section:thomas}).

\end{remark}
 
\subsubsection*{Subgame perfect equilibria}
In the multiplayer game setting, the solution concepts usually studied are \emph{equilibria} (see~\cite{GU08}). We here recall the concepts of Nash equilibrium and subgame perfect equilibrium.

Let $\sigma = (\sigma_i)_{i\in \Pi}$ be a strategy profile in a game $(\mathcal{G},v_0)$. When we highlight the role of Player~$i$, we denote $\sigma$ by $(\sigma_i, \sigma_{-i})$ where $\sigma_{-i}$ is the profile $(\sigma_j)_{j\in \Pi \setminus \{i\}}$. A strategy $\sigma'_i \neq \sigma_i$ is a \emph{deviating} strategy of Player~$i$, and it is a \emph{profitable deviation} for him if $\Gain_i(\outcome{\sigma}{v_0}) < \Gain_i(\outcome{\sigma'_i, \sigma_{-i}}{v_0})$. A strategy profile $\sigma$ in a game $(\mathcal{G},v_0)$ is a \emph{Nash equilibrium} (NE) if no player has an incentive to deviate unilaterally from his strategy, \emph{i.e.,} no player has a profitable deviation. 

A refinement of NE is the concept of \emph{subgame perfect equilibrium} (SPE) which is a strategy profile being an NE in each subgame. 
Formally, given a game $({\mathcal G},v_0) = (\Ar,(\Gain_i)_{i\in\Pi})$ and a history $hv \in \Hist(v_0)$, the game $(\mathcal{G}_{\restriction h},v)$ is called a \emph{subgame} of $(\mathcal{G},v_0)$ such that $\mathcal{G}_{\restriction h} = (\Ar, (\Gain_{i\restriction h})_{i\in \Pi})$ and $\Gain_{i\restriction h}(\rho) = \Gain_i(h\rho)$ for all $i \in \Pi$ and $\rho \in V^{\omega}$. Notice that $(\mathcal{G},v_0)$ is subgame of itself. Moreover if $\sigma_i$ is a strategy for Player~$i$ in $(\mathcal{G},v_0)$, then $\sigma_{i\restriction h}$ denotes the strategy in $(\mathcal{G}_{\restriction h},v)$ such that for all histories $h'\in \Hist_i(v)$, $\sigma_{i\restriction h}(h') = \sigma_i(hh')$. Similarly, from a strategy profile $\sigma$ in $(\mathcal{G},v_0)$, we derive the strategy profile $\sigma_{\restriction h}$ in $(\mathcal{G}_{\restriction h},v)$. Let $(\mathcal{G},v_0)$ be a game, following this formalism, a strategy profile $\sigma$ is a \emph{subgame perfect equilibrium} in $(\mathcal{G},v_0)$ if for all $hv \in \Hist(v_0)$, $\sigma_{\restriction h}$ is an NE in $(\mathcal{G}_{\restriction h},v)$.

\subsubsection*{Studied problem} Given a game $(\mathcal{G},v_0)$, several SPEs may coexist. It is the reason why we are interested in the \emph{constrained existence} of an SPE in this game: some players have to win and some other ones have to lose. The related decision problem is the following one:

\begin{definition}[Constrained existence problem]
	\label{def:constrained} Given a game $(\mathcal{G},v_0)$ and two gain profiles $x,y \in \{0,1\}^{|\Pi|}$, does there exist an SPE $\sigma$ in $(\mathcal{G},v_0)$ such that $x \leq \Gain(\outcome{\sigma}{v_0}) \leq y$.\end{definition}


\section{SPE in a game and its quotient}
\label{section:SPE_game_quotient}
In this section, we first define the concept of \emph{bisimulation between games} (resp. \emph{bisimulation on a game}). Then, we explain how given such bisimulations we can obtain a new game, called the \emph{quotient game}, thanks to a quotient of the initial game. Finally, we prove that if there exists an SPE in a game with a given gain profile, there exists an SPE in its associated quotient game with the same gain profile, and vice versa.

\subsection{Game bisimulation}

We extend the notion of bisimulation between transition systems (resp. on a transition system) to the one of bisimulation between games (resp. on a game). In this paper, by bisimulation between games (resp. on a game) we mean:

\begin{definition}[Game bisimulation] Given two games $(\mathcal{G},v_0) = (\Ar, (\Gain_i)_{i\in \Pi})$ and $(\mathcal{G}',v'_0) = (\Ar',(\Gain'_i)_{i\in \Pi})$ with the same alphabet and the same set of players, we say that $\sim \subseteq V \times V'$ is a \emph{bisimulation between $(\mathcal{G},v_0)$ and $(\mathcal{G}',v'_0)$ } if \emph{(i)} $\sim$ is a bisimulation between $T = (\Sigma, V, E)$ and $T' = (\Sigma, V',E')$ and \emph{(ii)} $v_0 \sim v'_0$. 
In the same way, if $\sim \subseteq V \times V$ we say that $\sim$ is a \emph{bisimulation on $(\mathcal{G},v_0)$} if $\sim$ is a bisimulation on $T = (\Sigma, V, E)$. 
\end{definition}

The notion of bisimulation equivalence on a transition system is extended in the same way to games. In the rest of this document, we use the following notations: \emph{(1)} If $\sim\subseteq V \times V'$ is a bisimulation between $(\mathcal{G},v_0) = (\Ar, (\Gain_i)_{i\in})$ and $(\mathcal{G}',v'_0) = (\Ar', (\Gain'_i)_{i\in})$, for each $\rho \in \Plays_{\Ar}$ and for all $\rho' \in \Plays_{\Ar'}$, we write $\rho \sim \rho'$ if and only if for each $n \in \mathbb{N}$: $\rho_n \sim \rho'_n$. \emph{(2)} If $\sim\subseteq V \times V$ is a bisimulation on $(\mathcal{G},v_0) = (\Ar, (\Gain_i)_{i\in})$ , for each $\rho \in \Plays_{\Ar}$ and for all $\rho' \in \Plays_{\Ar}$, we write $\rho \sim \rho'$ if and only if for each $n \in \mathbb{N}$: $\rho_n \sim \rho'_n$. \emph{(3)} Notations $1$ and $2$ can be naturally  adapted to histories\footnote{Once again, with this convention it is possible that two plays (or histories) such that $\rho \sim \rho'$ do not preserve the sequence of alphabet letters as it should be when we classicaly consider bisimulated paths in two bisimulated transitions systems. Remark~\ref{rem:2} explains why it is not a problem for us.}.\\


A natural property that should be satisfied by a bisimulation on a game is the respect of the vertices partition. It means that if a vertex bisimulates an other vertex, then these vertices should be owned by the same player. 

\begin{definition}[$\sim$ respects the partition] Given a game $(\mathcal{G},v_0) = (\Ar, (\Gain_i)_{i\in \Pi})$  and a bisimulation $\sim$ on $(\mathcal{G},v_0)$, we say that $\sim$ \emph{respects the partition} if for all $v,v' \in V$ such that $v\sim v'$, if $v \in V_i$ then $v' \in V_i$.
\end{definition}

\subsection{Quotient game}

Given a game $(\mathcal{G},v_0)$ and a bisimulation equivalence $\sim$ on it which respects the partition, one may consider its associated \emph{quotient game} $\initQ$ such that its transition system is defined as the quotient of the transition system of $(\mathcal{G},v_0)$.

\begin{definition}[Quotient game]
Given a game $(\mathcal{G},v_0) = (\Ar, (\Gain_i)_{i\in \pi})$ such that $\Ar =( \Sigma,V,E,\Pi, (V_i)_{i\in \Pi})$, if $\sim$ is a bisimulation equivalence on $(\mathcal{G},v_0)$ which respects the partition, the associated \emph{quotient game} $ \initQ = ( \tAr, (\tilde{\Gain}_i)_{i \in \Pi})$ is defined as follows: \emph{(i)} $\tAr = (\Sigma, \tilde{V}, \tilde{E}, (\tilde{V}_i)_{i\in \Pi})$ is such that $\tilde{T} = (\Sigma, \tilde{V}, \tilde{E})$ is the quotient of $T$ and, for each $i \in \Pi$, $\Class{v} \in \tilde{V}_i$ if and only if $v \in V_i$ and \emph{(ii)} for each $i \in \Pi$, $\tilde{\Gain}_i: \Plays_{\tAr} \rightarrow \{ 0, 1\}$ is the gain fonction of Player~$i$.
\end{definition}


In order to preserve some equivalent properties between a game and its quotient game, the equivalence relation on the game should respect the gain functions in both games. It means that if we consider two bisimulated plays either both in the game itself or one in the game and the other one in its quotient game, the gain profile of these plays should be equal.

\begin{definition}[$\sim$ respects the gain functions]
	Given an initialized game $(\mathcal{G},v_0) = (\Ar, (\Gain_i)_{i\in \pi})$ such that $\Ar =( \Sigma,V,E,\Pi, (V_i)_{i\in \Pi})$ and a bisimulation equivalence $\sim$ on $(\mathcal{G},v_0)$, we say that $\sim$ \emph{respects the gain functions} if the following properties hold: \emph{(i)} for each $\rho$ and $\rho'$ in $\Plays$, if $\rho \sim \rho'$ then $\Gain(\rho) = \Gain(\rho')$ and \emph{(ii)} for each $\rho \in \Plays_{\Ar}$ and $\tilde{\rho} \in \Plays_{\tAr}$, if $\rho \sim_q \tilde{\rho}$ then $\Gain(\rho) = \tilde{\Gain}(\tilde{\rho})$.
\end{definition}

\subsection{ Existence of SPE}

The aim of this section is to prove that, if there exists an SPE in a game equipped with a bisimilation equivalence which respects the partition and the gain functions, there exists an SPE in its associated quotient game with the same gain profile, and vice versa. 

\begin{theorem}
	\label{thm:eqSPE}
	Let $(\mathcal{G},v_0) = (\Ar, (\Gain_i)_{i\in \pi})$ be a game and   $\initQ = ( \tAr, (\tilde{\Gain}_i)_{i \in \Pi})$ its associated quotient game where $\sim$ is a bisimulation equivalence on $(\mathcal{G},v_0)$. If $\sim$ respects the partition and the gain functions, we have that: there exists an SPE $\sigma$ in  $(\mathcal{G},v_0)$ such that $\Gain(\outcome{\sigma}{v_0}) = p $ if and only if there exists an SPE $\tau$ in $\initQ$ such that $\tilde{\Gain}(\outcome{\tau}{\Class{v_0}}) = p$.	
\end{theorem}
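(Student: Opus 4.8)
The plan is to prove the two implications separately, each time transferring a strategy profile across the bisimulation $\sim_q \colon V \to \tilde V$ between $(\mathcal G, v_0)$ and its quotient $\initQ$ (the existence of this bisimulation is the Lemma stated at the end of the Preliminaries). The conceptual difficulty is that $\sim_q$ is a \emph{function} in one direction only: from a play in $\initQ$ there is, in general, no canonical play in $(\mathcal G,v_0)$ projecting onto it, and from a history in $(\mathcal G,v_0)$ the quotient history is uniquely determined, but a deviating strategy in the quotient must be ``lifted'' by \emph{choosing}, step by step, a concrete successor in $V$ among those whose class is the prescribed one. So the core of the proof is a lifting/projection construction for histories and strategies that is compatible with the consistency relation and with subgames.

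First I would set up the elementary translation lemmas. For the easy direction, given a history $h = h_0 \cdots h_k \in \Hist(v_0)$ in $\mathcal G$, its image $\tilde h := [h_0]\cdots[h_k]$ is a history in $\initQ$ starting at $[v_0]$; conversely, using that $\sim_q^{-1}$ is a simulation of $\tilde T$ by $T$ and that $\sim$ respects the partition, I would show by induction on length that for every $\tilde h \in \Hist([v_0])$ and every $v \in V$ with $[v] = \tilde h_0 = [v_0]$ — so in particular for $v = v_0$ — there is at least one $h \in \Hist(v_0)$ with image $\tilde h$, and moreover, given a fixed choice function selecting, for each $(\,[v],[w]\,)\in\tilde E$ and each $v' \in [v]$, some $w' \in [w]$ with $v' \to w'$, this lift is canonical. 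Because $\sim$ respects the partition, $h$ ends in a vertex of Player $i$ iff $\tilde h$ does, so the construction restricts to $\Hist_i$. Since $\sim$ respects the gain functions, bisimilar plays (in particular a play and its $\sim_q$-image) have the same gain profile; this is what will make winners/losers match.

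Next, the strategy transfer. For the forward implication, let $\sigma$ be an SPE in $(\mathcal G,v_0)$ with $\Gain(\outcome{\sigma}{v_0}) = p$. Define $\tau_i$ on $\Hist_i([v_0])$ by $\tau_i(\tilde h) := [\,\sigma_i(h)\,]$ where $h$ is the canonical lift of $\tilde h$; one checks $\tau_i(\tilde h) \in \Succ(\Last(\tilde h))$ because $\sigma_i(h) \in \Succ(\Last(h))$ and the quotient edge relation contains the image of $E$. A play $\tilde\rho$ consistent with $\tau_{\restriction \tilde h}$ in a subgame corresponds (via the lift, applied to prefixes) to a play $\rho$ consistent with $\sigma_{\restriction h}$, and $\rho \sim_q \tilde\rho$, hence they have equal gain; in particular $\outcome{\tau}{[v_0]} \sim_q \outcome{\sigma}{v_0}$ so its gain is $p$. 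For SPE it suffices to show that in every subgame no player has a profitable deviation: given a deviation $\tau_i'$ in $(\tilde{\mathcal G}_{\restriction \tilde h},[v])$, lift it to a deviation $\sigma_i'$ in $(\mathcal G_{\restriction h}, v)$ by $\sigma_i'(h') := $ (the canonical-lift successor of) $\tau_i'(\tilde{h'})$; the resulting outcomes are $\sim_q$-related, so equal gain for Player $i$, and since $\sigma_{\restriction h}$ is an NE the deviation was not profitable, hence neither is $\tau_i'$. The reverse implication is symmetric but uses the lift of histories in the other direction: from an SPE $\tau$ in $\initQ$, set $\sigma_i(h) := $ the canonical-lift successor associated to the quotient move $\tau_i([h])$ at the vertex $\Last(h)$, and run the same outcome-correspondence and deviation-lifting argument, again invoking that $\sim$ respects the partition (so the construction is well-typed) and the gain functions (so gains are preserved).

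The main obstacle I expect is making the ``canonical lift'' genuinely well-defined and compatible with the subgame operation and with consistency \emph{simultaneously}: the lift of a history must agree with the lift of its extensions (a prefix-closed, deterministic lifting), it must commute with $\sigma_i$ resp.\ $\tau_i$ so that consistent plays map to consistent plays in both directions, and when we pass to a subgame $\mathcal G_{\restriction h}$ the lift used there must be the ``tail'' of the global lift rooted at $h$. The clean way to handle this is to fix, once and for all, a single choice function $c$ on the quotient edges (for every $([v],[w]) \in \tilde E$ and every $v' \in [v]$, a successor $c([v],[w],v') \in [w]\cap\Succ(v')$, which exists by the simulation property) and to define the lift of any quotient history from its first vertex purely by iterating $c$; then all the required commutations are immediate inductions on length, and nothing depends on the particular history or subgame beyond its starting vertex. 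Once that bookkeeping is in place, the equilibrium-preservation arguments are routine, driven entirely by ``$\sim_q$-related plays have equal gain''.
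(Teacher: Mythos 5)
Your reverse implication (lifting an SPE $\tau$ of $\initQ$ to $(\mathcal G,v_0)$ by choosing, via a fixed choice function, a concrete successor in the class $\tau_i([h])$) is sound and is essentially the argument the paper uses for that direction. The forward implication, however, has a genuine gap, and it sits exactly where you place your ``canonical lift''. You define $\tau_i(\tilde h):=[\sigma_i(h)]$ with $h$ the canonical lift of $\tilde h$ obtained by iterating a history-independent choice function $c$ on quotient edges, and you then claim that a play consistent with $\tau_{\restriction\tilde h}$ corresponds, via the lift, to a play consistent with $\sigma_{\restriction h}$. This is false in general: the canonical-lift successor of $g_0\cdots g_k$ into the class $[\sigma_i(g_0\cdots g_k)]$ is merely \emph{some} vertex of that class adjacent to $g_k$, not $\sigma_i(g_0\cdots g_k)$ itself, so the lifted play need not be $\sigma$-consistent; symmetrically, the projection of a $\sigma$-consistent play $\rho$ visits quotient histories whose canonical lifts differ from the actual prefixes of $\rho$, and since a non-uniform $\sigma$ may answer two $\sim$-equivalent histories with non-equivalent moves, there is no reason that $[\sigma_i(\rho_0\cdots\rho_k)]$ equals $\tau_i([\rho_0]\cdots[\rho_k])=[\sigma_i(\mathrm{lift}([\rho_0]\cdots[\rho_k]))]$. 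The same problem reappears at opponent vertices when you lift a deviation $\tau_i'$: the play $\outcome{\sigma_i',\sigma_{-i\restriction h}}{v}$ leaves the canonical lift as soon as some $\sigma_j$ ($j\neq i$) does not follow $c$, after which the two outcomes are no longer $\sim_q$-related and the gain comparison collapses. No history-independent choice function can repair this, because $\sigma_i$ depends on the entire history and not only on the current vertex and the target class.

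The missing idea is the uniformization step, which is the technical core of the paper's proof: Proposition~\ref{prop:uniformSPE} shows that one may assume w.l.o.g.\ that the SPE $\sigma$ is \emph{uniform}, i.e.\ $h\sim h'$ implies $\sigma_i(h)\sim\sigma_i(h')$. With uniformity, $[\sigma_i(h)]$ is independent of the representative $h$ of the class of $\tilde h$, so $\tau$ is well defined without any choice function and, crucially, satisfies $\sigma_i(h)\sim_q\tau_i(\tilde h)$ for \emph{every} $h\sim_q\tilde h$ (Claim~1 in the appendix); this is exactly the property your outcome and deviation inductions need at each step. Proving Proposition~\ref{prop:uniformSPE} is itself nontrivial --- the paper builds history-dependent witnesses $\mathcal R([h])$ that follow $\sigma$-outcomes from designated entry points, rather than a per-edge choice function --- and none of that work is present in, or replaceable by, your construction.
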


The key idea is to prove that: if there exists an SPE in a game equipped with a bisimulation equivalence, there exists an SPE in this game which is \emph{uniform} and with the same gain profile. If $\sigma_i$ is an uniform strategy, each time we consider two histories $h\sim h'$, the choices of Player~$i$ taking into account $h$ or $h'$ are in the same equivalence class.

\begin{definition}
	Let  $(\mathcal{G},v_0)$ be a game and $\sim$ a bisimulation on it, we say that the strategy $\sigma_i$ is \emph{uniform} if for all $h,h'\in \Hist_i(v_0)$ such that $h \sim h'$, we have that $\sigma_i(h) \sim \sigma_i(h')$.
	A strategy profile $\sigma$ is $\emph{uniform}$ if for all $i \in \Pi$, $\sigma_i$ is uniform.
\end{definition}

\begin{proposition}
	\label{prop:uniformSPE}
	Let $(\mathcal{G},v_0) = (\Ar, (\Gain_i)_{i\in \pi})$ be a game and $\sim$ be a bisimulation equivalence on $(\mathcal{G},v_0)$ which respects the partition and such that for each $\rho$ and $\rho'$ in $\Plays$, if $\rho \sim \rho'$ then $\Gain(\rho) = \Gain(\rho')$, there exists an SPE $\sigma$ in $(\mathcal{G},v_0)$ such that $\Gain(\outcome{\sigma}{v_0}) = p$ if and only if there exists an SPE $\tau$ in $(\mathcal{G},v_0)$ which is uniform and such that $\Gain(\outcome{\tau}{v_0})= p.$
\end{proposition}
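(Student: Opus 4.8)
The plan is to prove the two implications, of which only one is substantial. If a uniform SPE $\tau$ with $\Gain(\outcome{\tau}{v_0})=p$ exists then it is in particular an SPE with that outcome gain profile, so the right-to-left implication is free. For the other direction I would start from an SPE $\sigma$ with $\Gain(\outcome{\sigma}{v_0})=p$ and \emph{uniformize} it: build a uniform profile $\tau$ whose outcome in every subgame is pointwise $\sim$-equivalent to the outcome of a suitable subgame of $\sigma$, so that (as $\sim$ respects the gain functions) the two have the same gain, and any profitable deviation against $\tau$ can be transported to one against $\sigma$.

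The construction goes as follows. First I would fix a choice of canonical successors: for every history $g\in\Hist(v_0)$ and every $\sim$-class $c$ with $\Succ(\Last(g))\cap c\neq\emptyset$, a vertex $\mathrm{succ}(g,c)\in\Succ(\Last(g))\cap c$, subject to one constraint stated below. With it I define a map $\phi\colon\Hist(v_0)\to\Hist(v_0)$ by $\phi(v_0)=v_0$ and $\phi(hv)=\phi(h)\cdot\mathrm{succ}(\phi(h),[v])$; the step is legal because $\Last(\phi(h))\sim\Last(h)$ has a successor in $[v]$ by the bisimulation, and one checks that $\phi(h)\sim h$ pointwise, that $\phi\circ\phi=\phi$, and — using that $\sim$ respects the partition — that $\phi$ is \emph{uniform}, i.e.\ $h\sim h'$ implies $\phi(h)=\phi(h')$ (by induction on length, the two histories being literally equal at the previous step while $[v]=[v']$). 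Let $\mathcal{F}:=\mathrm{Im}(\phi)$, the set of canonical histories; the constraint I would impose on $\mathrm{succ}$ is compatibility with $\sigma$ along $\mathcal{F}$: if $g\in\mathcal{F}$, $\Last(g)\in V_i$ and $c=[\sigma_i(g)]$, then $\mathrm{succ}(g,c):=\sigma_i(g)$ — only one class $c$ is constrained for a given $g$, so this is compatible with $\mathrm{succ}$ being a choice function. Finally I set $\tau_i(h):=\mathrm{succ}\bigl(h,[\sigma_i(\phi(h))]\bigr)$ for $h\in\Hist_i(v_0)$; this is a legal strategy, and it is uniform because $\phi$ is uniform and $\mathrm{succ}(g,c)\in c$.

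The core of the argument is a mirroring lemma proved by induction. Fix $hv\in\Hist(v_0)$ and write $\phi(hv)=\hat h\hat v$ with $\hat h=\phi(h)$; then $\hat h\sim h$ and $\hat v\sim v$. I claim that any play $\mu$ of $(\mathcal{G}_{\restriction h},v)$ consistent with $(\tau_{-j})_{\restriction h}$ admits a companion play $\nu$ of $(\mathcal{G}_{\restriction\hat h},\hat v)$ consistent with $(\sigma_{-j})_{\restriction\hat h}$, and a strategy $\sigma'_j$ of $j$ producing $j$'s moves along $\nu$, such that $\nu\sim\mu$ pointwise and $\phi(h\,\mu_{\le k})=\hat h\,\nu_{\le k}$ for all $k$. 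The inductive step uses compatibility: at a position owned by $i\neq j$, the induction hypothesis gives $\mu_{k+1}=\tau_i(h\,\mu_{\le k})=\mathrm{succ}(h\,\mu_{\le k},[\sigma_i(\hat h\,\nu_{\le k})])$, hence $[\mu_{k+1}]=[\sigma_i(\hat h\,\nu_{\le k})]$, which forces $\mathrm{succ}(\hat h\,\nu_{\le k},[\mu_{k+1}])=\sigma_i(\hat h\,\nu_{\le k})$; choosing $\nu_{k+1}$ equal to this value keeps $\nu$ \emph{exactly} consistent with $\sigma_i$, while $\nu_{k+1}\in[\mu_{k+1}]$ and $\phi$ of the extended $\tau$-prefix extends as required; at a position owned by $j$ one sets $\nu_{k+1}:=\mathrm{succ}(\hat h\,\nu_{\le k},[\mu_{k+1}])\in[\mu_{k+1}]$ and records it in $\sigma'_j$. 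Applied with no deviation this gives $\outcome{\tau_{\restriction h}}{v}\sim\outcome{\sigma_{\restriction\hat h}}{\hat v}$ pointwise; since moreover $h\sim\hat h$ pointwise, the plays $h\cdot\outcome{\tau_{\restriction h}}{v}$ and $\hat h\cdot\outcome{\sigma_{\restriction\hat h}}{\hat v}$ of $(\mathcal{G},v_0)$ are $\sim$-equivalent, so they carry the same $\Gain$. In particular the case $h$ empty, $v=v_0$ yields $\Gain(\outcome{\tau}{v_0})=\Gain(\outcome{\sigma}{v_0})=p$.

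It remains to check that $\tau$ is an SPE. Suppose not; then some player $j$ has a profitable deviation $\tau'_j$ against $\tau_{\restriction h}$ in a subgame $(\mathcal{G}_{\restriction h},v)$. Let $\mu$ be the resulting outcome and $\nu,\sigma'_j$ its mirror; then $\nu=\outcome{(\sigma'_j,(\sigma_{-j})_{\restriction\hat h})}{\hat v}$, and $h\mu\sim\hat h\nu$ pointwise, so $\Gain_j(h\mu)=\Gain_j(\hat h\nu)$. Combined with the no-deviation comparison of the previous paragraph, this makes $\sigma'_j$ a profitable deviation for $j$ against $\sigma_{\restriction\hat h}$; since $\hat h\hat v=\phi(hv)\in\Hist(v_0)$, the game $(\mathcal{G}_{\restriction\hat h},\hat v)$ is a genuine subgame of $(\mathcal{G},v_0)$, contradicting that $\sigma$ is an SPE. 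I expect the main obstacle to be the design of the canonical-successor function: it has to be history-dependent enough to be compatible with $\sigma$ on $\mathcal{F}$ — this is what guarantees that the mirrored play $\nu$ lands \emph{exactly} on $\sigma_{-j}$ and is therefore a legitimate outcome of $(\mathcal{G},v_0)$ — while still being blind enough that $\phi$ and $\tau$ come out uniform; the borderline case where the deviation $\tau'_j$ happens to imitate $\sigma_j$ up to $\sim$ also needs to be handled with a little care.
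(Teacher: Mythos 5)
Your proof is correct and follows essentially the same route as the paper's: you pick a canonical, $\sim$-invariant, idempotent representative $\phi(h)$ of each history class on which the new profile coincides with $\sigma$, let $\tau$ mimic ``$\sigma$ at the representative'' elsewhere, and transport any profitable deviation against $\tau$ at $(h,v)$ to one against $\sigma$ at $(\phi(h),\hat v)$ --- exactly the role played in the paper by the witness map $\mathcal{R}$, its three invariants, and the intermediate strategies $\tilde{\tau}_i$ and $\tilde{\sigma}_i$. Your single mirroring induction (enabled by imposing $\mathrm{succ}(g,[\sigma_i(g)])=\sigma_i(g)$ on canonical histories, so that the mirrored play lands exactly on $\sigma_{-j}$) is a clean repackaging of the paper's three separate inductions; the only point you leave implicit is that constraining $\mathrm{succ}$ on $\mathrm{Im}(\phi)$ while $\phi$ is built from $\mathrm{succ}$ is non-circular, which is settled by defining $\mathrm{succ}$ by induction on the length of histories.
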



\section{Reachability games}
\label{section:reach}

In this section we focus on a particular kind of game called
\emph{reachability game}. In these games, each player has a subset of
vertices that he wants to reach.
First, we formally define the concepts of \emph{reachability games} and \emph{reachability quotient games}. Then, we provide an algorithm which solves the constrained existence problem in finite reachability games in time complexity at most exponential in the number of players and polynomial in the size of the transition system of the game.

\subsection{Reachability games and quotient reachability games}

\begin{definition} A \emph{reachability game} $(\mathcal{G},v_0) = (\Ar, (\Gain_i)_{i\in \Pi}, (F_i)_{i\in \Pi})$ is a game where each player $i \in \Pi$ is equipped with a \emph{target set} $F_i $ that he wants to reach. Formally, the objective of Player~$i$ is $\Omega_i = \{ \rho \in \Plays \mid \Occ(\rho) \cap F_i \neq \emptyset\}$ where $F_i \subseteq V$. This is a \emph{reachability objective}. 
\end{definition}

Given a reachability game $\initG = (\Ar, \allGain, \allTar)$ and a bisimulation equivalence $\sim$ on this game which respects the partition, one may consider its quotient game $\initQ = (\tAr, (\qGain{i})_{i \in \Pi}, (\tilde{F}_i)_{i\in \Pi})$ where for each $i \in \Pi$, $\tilde{F}_i  \subseteq \tilde{V}$. In attempts to ensure the respect of the gain functions by $\sim$, we add a natural property on $\sim$ (see Definition~\ref{def:respTar}) and define the sets $\tilde{F}_i$ in a proper way. In the rest of this paper, we assume that this property is satisfied and that the quotient game of a reachability game is defined as in Definition~\ref{def:quotientReach}.

\begin{definition}[$\sim$ respects the target sets]
	\label{def:respTar}
	Let $\initG $ be a reachability game and $\sim$ be a bisimulation equivalence on this game, we say that $\sim$ \emph{respects the target sets} if  for all $v \in V$ and for all $v' \in V$ such that $v\sim v'$:	$  v\in F_i \Leftrightarrow v'\in F_i)$. 
\end{definition}

\begin{definition}[Reachability quotient game]
	\label{def:quotientReach}\sloppy
	 Given a reachability  game $\initG= (\Ar, \allGain, \allTar)$ and a bisimulation equivalence $\sim$ on this game which respects the partition and the target sets,  its quotient game is the reachability game $\initQ = (\tAr, (\qGain{i})_{i \in \Pi}, (\tilde{F}_i)_{i\in \Pi})$ where $\tilde{F}_i = \{ \Class{v}_{\sim} \mid v \in F_i \}$ for each $i\in \Pi.$ We call this game the \emph{reachability quotient game}.
\end{definition}

\begin{lemma}
	Let $\initG$ be a reachability game  and let $\sim$ be  a bisimulation equivalence  which respects the target sets on this game, $\sim$ respects the gain fonctions.
\end{lemma}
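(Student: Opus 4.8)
The goal is to show that a bisimulation equivalence $\sim$ respecting the target sets automatically respects the gain functions, i.e.\ that both conditions (i) and (ii) in the definition of ``$\sim$ respects the gain functions'' hold. The key observation is that, for reachability objectives, the gain of a play depends only on \emph{which equivalence classes of target states are visited}, and bisimulation (together with Definition~\ref{def:respTar}) transfers visits of target states between bisimilar plays. I would first record the elementary consequence of Definition~\ref{def:respTar}: for each $i \in \Pi$ and each $v \in V$, $v \in F_i$ if and only if every $v' \sim v$ satisfies $v' \in F_i$; hence membership in $F_i$ is a property of the class $[v]_\sim$, and moreover $[v]_\sim \in \tilde{F}_i$ (as defined in Definition~\ref{def:quotientReach}) if and only if $v \in F_i$.

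For condition (i), take $\rho, \rho' \in \Plays$ with $\rho \sim \rho'$, so $\rho_n \sim \rho'_n$ for all $n \in \mathbb{N}$. Fix a player $i$. If $\Gain_i(\rho) = 1$, there is some index $n$ with $\rho_n \in F_i$; since $\rho_n \sim \rho'_n$ and $\sim$ respects the target sets, $\rho'_n \in F_i$, so $\Occ(\rho') \cap F_i \neq \emptyset$ and $\Gain_i(\rho') = 1$. The argument is symmetric (using $\sim$ is an equivalence, in particular symmetric), giving $\Gain_i(\rho) = \Gain_i(\rho')$ for every $i$, hence $\Gain(\rho) = \Gain(\rho')$. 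Condition (ii) is handled the same way, now using $\rho \sim_q \tilde{\rho}$ where $\sim_q : V \to \tilde V$ is the quotient bisimulation of the preliminary lemma, so $\rho_n \sim_q \tilde\rho_n$ means $\tilde\rho_n = [\rho_n]_\sim$ for all $n$. Then $\rho_n \in F_i$ iff $[\rho_n]_\sim \in \tilde F_i$ iff $\tilde\rho_n \in \tilde F_i$, so $\Occ(\rho) \cap F_i \neq \emptyset$ iff $\Occ(\tilde\rho) \cap \tilde F_i \neq \emptyset$, i.e.\ $\Gain_i(\rho) = \tilde\Gain_i(\tilde\rho)$ for each $i$, hence $\Gain(\rho) = \tilde\Gain(\tilde\rho)$.

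This proof is essentially a bookkeeping argument and there is no real obstacle; the only point that deserves care is making sure condition (ii) is stated over the correct relation $\sim_q$ (linking the game with its quotient) rather than $\sim$ itself, and checking that the definition of $\tilde F_i$ in Definition~\ref{def:quotientReach} is exactly the one that makes ``$[v]_\sim \in \tilde F_i \iff v \in F_i$'' hold --- which it is, precisely because $\sim$ respects the target sets (otherwise $\tilde F_i = \{[v]_\sim \mid v \in F_i\}$ could contain a class with a non-target representative, breaking the equivalence). So the respect-of-target-sets hypothesis is used in two places: once directly in condition (i), and once to guarantee well-definedness of the target classes used in condition (ii).
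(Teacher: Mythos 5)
Your proof is correct and is exactly the straightforward argument the paper leaves implicit (the lemma is stated without proof): condition (i) follows pointwise from the respect of target sets, and condition (ii) from the fact that $[v]_\sim \in \tilde F_i \iff v \in F_i$, which itself needs the respect-of-target-sets hypothesis for the backward direction. Your remark that this hypothesis is used in both places is the right observation; nothing is missing.
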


\subsection{Complexity results}

 It is proved that the constrained existence problem is
 PSPACE-complete in finite reachability
 game~\cite{BrihayeBGR18}. Our final purpose is to obtain
   an EXPTIME algorithm for the constrained existence problem on
   \emph{reachability timed games} (see
   Section~\ref{section:timedGames}). Naively applying the PSPACE
   algorithm of~\cite{BrihayeBGR18} to the region games would lead to
   an EXPSPACE algorithm. That is why we provide here an alternative
   EXPTIME algorithm to solve the constrained existence problem on
   (untimed) finite games. This new algorithm will have the advantage
   to have a running time at most exponential only in the number of
   players (and polynomial in the size of its transition system). This
   feature will be crucial to obtain the EXPTIME algorithm on timed
   games.

\begin{theorem}
\label{thm:exptime}
Given a finite reachability game $\initG$, the
  constrained existence problem can be solved by an algorithm whose
  time complexity is at most exponential in $|\Pi|$ and polynomial in
  the size of its transition system.
\end{theorem}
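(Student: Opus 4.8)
The plan is to characterise SPEs via a fixpoint computation on the space of possible ``threat values'' (the minimal gain each player can be guaranteed in continuation plays), following the outcome/requirement approach of~\cite{BrihayeBGR18} but organised so that the only exponential dependency is on $|\Pi|$. First I would recall that in a finite reachability game an SPE always exists and that SPE outcomes can be described through a notion of \emph{$\lambda$-consistent play}, where $\lambda : V \to \{0,1\}^{|\Pi|}$ assigns to each vertex a vector of values that a deviating player could secure in the subgame rooted there. The key quantities are, for each vertex $v$ and each player $i$, the value $\lambda_i(v)$ which is essentially whether Player~$i$ has a strategy, in the subgame from $v$, to reach $F_i$ against all other players playing adversarially \emph{relative to} the current $\lambda$. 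Since targets are reachability sets, each $\lambda_i(v)$ is a single bit, so $\lambda$ ranges over $(\{0,1\}^{|\Pi|})^{V}$, and the per-vertex object has size $O(|\Pi|)$.

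Next I would set up the fixpoint: starting from the optimistic assignment $\lambda^0$ (everyone is assumed to be able to win everywhere where it is not already impossible) and iteratively decreasing it, $\lambda^{k+1}_i(v)$ is recomputed by a standard attractor-type computation in the ``restricted'' game where edges leaving a vertex $u$ of some Player~$j$ are pruned whenever taking them would drop Player~$j$ below $\lambda^k_j(u)$ (such a move is never part of an SPE, since Player~$j$ would profitably deviate). Because $\lambda$ is monotone decreasing and lives in a lattice of height $O(|\Pi|\cdot|V|)$, the iteration stabilises after at most $O(|\Pi|\cdot|V|)$ rounds; each round is a collection of reachability/attractor computations, each polynomial in the size of the transition system. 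I would then prove the correctness lemma: a play $\rho$ from $v_0$ is the outcome of some SPE if and only if it is consistent with the fixpoint $\lambda^\ast$ in the sense that (a) every vertex visited has ``enough room'' and (b) whenever $\rho$ leaves a Player-$j$ vertex $u$, the continuation still gives Player~$j$ at least $\lambda^\ast_j(u)$. Both directions are the routine but somewhat delicate part: ($\Leftarrow$) one builds an SPE whose main line is $\rho$ and whose punishments, when a player deviates at $u$, switch to the $\lambda^\ast$-witness strategies coalizing against that player; ($\Rightarrow$) one shows any SPE outcome must respect the pruning, by an induction on the fixpoint stages showing $\lambda^\ast_j(u)$ is a genuine lower bound on what Player~$j$ gets after any history ending in $u$ in any SPE.

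Finally, for the constrained existence question with bounds $x \le \Gain(\rho) \le y$, I would observe that once $\lambda^\ast$ is computed it remains to decide whether there is a $\lambda^\ast$-consistent play from $v_0$ whose set of visited targets yields a gain vector in $[x,y]$. This is itself a reachability-style search in a product of the transition system with the subset $P \subseteq \Pi$ of players whose targets have been seen so far; that product has size $|V|\cdot 2^{|\Pi|}$, so the search is polynomial in $|V|$ and exponential only in $|\Pi|$. Putting the two phases together gives the announced bound. The main obstacle I expect is the correctness lemma's ($\Rightarrow$) direction: making precise that the fixpoint values are exactly the SPE threat values requires the standard careful argument that one can always combine individual punishing strategies into a coherent coalition strategy against a single deviator while keeping the profile an equilibrium in \emph{every} subgame, not just along the main play; this is where the reachability structure (finite, memoryless punishments suffice) is used in an essential way to keep everything finite and effective.
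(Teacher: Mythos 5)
Your overall architecture --- characterise SPE outcomes by a fixpoint labelling, then search for a consistent play with an admissible gain profile --- is the same as the paper's, but there is a genuine gap in where you place the $2^{|\Pi|}$ blow-up. The paper first builds the \emph{extended game} on $V\times 2^{\Pi}$, whose vertices record the set $I$ of players that have already visited their target along the prefix, and only then computes a one-bit labelling $\lambda^*:V^X\to\{0,1\}$ and defines consistency via the \emph{suffix} gains $\Gain^X_i(\rho_{\geq n})$ of the extended game. You instead keep the original vertex set, attach a vector $\lambda(v)\in\{0,1\}^{|\Pi|}$, and introduce the subset construction only in the final search phase. This loses essential information: the constraint that Player~$j$ imposes at a vertex $u$ he owns depends on whether $F_j$ has already been visited on the prefix leading to $u$ (if it has, $j$ has no incentive to deviate and the suffix is unconstrained, even if $j$ ``could force'' $F_j$ from $u$). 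Your condition (b) --- ``the continuation from $u$ still gives Player~$j$ at least $\lambda^*_j(u)$'' --- is therefore too strong and wrongly rejects SPE outcomes in which a player wins early; symmetrically, the set of $\lambda$-consistent continuations from a successor $v'$, over which your $\min$ is taken, genuinely depends on which players have already won, so the per-vertex object really needs $2^{|\Pi|}$ bits (one per subset $I$), not $|\Pi|$ bits. The exponential in $|\Pi|$ must enter the fixpoint computation itself, not only the last reachability search.

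A second, smaller issue is the direction of your iteration: you start from an optimistic labelling and decrease, whereas the paper starts from $\lambda^0\equiv 0$ (no constraints) and the operator $\lambda^{k+1}(v)=\max_{v'\in\Succ(v)}\min\{\Gain^X_i(\rho)\mid\rho\models\lambda^k\}$ is monotone increasing, converging to the \emph{least} fixpoint, which is the one proved (Proposition~\ref{prop:critOut}) to characterise exactly the SPE outcomes. Iterating downward from an optimistic seed computes, at best, a different (larger) fixpoint and would need a separate argument that it coincides with the least one; as stated, neither the monotonicity of your descent nor its convergence to the correct fixpoint is justified. The final phase of your proposal (enumerating the at most $2^{|\Pi|}$ profiles $p$ with $x\leq p\leq y$ and doing a reachability search in the product with $2^{\Pi}$) does match the paper's step \emph{(iii)} and is fine once the characterisation is repaired.
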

This approach follows the proof for quantitative reachability games
in~\cite{BrihayeBGRB19}. This latter proof relies on two key
ingredients: \emph{(i)} the \emph{extended game} of a reachability
game and \emph{(ii)} an \emph{SPE outcome characterization based on a
  fixpoint computation of a labeling function of the states}. Those
two key ingredients will be defined below. Further technical details
can be found in~\cite{BrihayeBGRB19} for the quantitative case.\\

\subsubsection{Extended game} Let $\initG$ be finite a reachability game, its associated \emph{extended game} $\initX = (\eAr,(\Gain^X_i)_{i\in \Pi}, (F^X_i)_{i\in \Pi})$ is the reachability game such that the vertices are enriched with the set of players that have already visited their target sets along a history.  The arena $\eAr = (\Sigma,V^X,E^X, \Pi, (V^X_i)_{i \in \Pi})$ is defined as follows: \emph{(i)} $V^X = V \times 2^\Pi$; \emph{(ii)} $((v,I),a,(v',I')) \in E^X$ if and only if $(v,a,v')\in E$ and $I' = I \cup \{i \in \Pi \mid v'\in F_i\}$;  \emph{(iii)} $(v,I) \in V^X_i$ if and only if $v \in V_i$; \emph{(iv)} $(v,I) \in F^X_i$ if and only if $i\in I$ and \emph{(v)} $x_0 = (v_0,I_0)$ where $I_0 = \{ i \in \Pi \mid v_0 \in F_i \}$.

The construction of $\initX$ from $\initG$ causes an exponential blow-up  of the number of states. The main idea of this construction is that if you consider a play $\rho = (v_0,I_0)(v_1,I_1) \ldots (v_n,I_n) \ldots \in \Plays_{\eAr}(x_0)$ , the set $I_n$ means that each player~$i\in I_n$ has visited his target set along $\rho_0\ldots \rho_n$. The important points are that there is a one-to-one correspondance between plays in $\Plays_{\Ar}(v_0)$ and $\Plays_{\eAr}(x_0)$ and that the gain profiles of two corresponding plays beginning in the initial vertices are equal. From these observations, we have:

\begin{proposition}	
	\label{prop:eqExtended}
	Let $\initG$ be a reachability game and $\initX$ be its associated extended game, let $p \in \{0,1\}^{|\Pi|}$ be a gain profile, there exists an SPE $\sigma$ in $\initG$ with gain profile $p$ if and only if there exists an SPE $\tau$ in $\initX$ with gain profile $p$.
\end{proposition}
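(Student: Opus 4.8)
The plan is to establish the one-to-one correspondence between plays announced in the paragraph preceding the statement, and then transport SPEs along it in both directions. Concretely, I would first define a bijection $\varphi : \Plays_{\eAr}(x_0) \to \Plays_{\Ar}(v_0)$ by projecting onto the first component: $\varphi((v_0,I_0)(v_1,I_1)\cdots) = v_0 v_1 \cdots$. The inverse map is obtained by the (deterministic) rule $I_{n+1} = I_n \cup \{i \in \Pi \mid v_{n+1} \in F_i\}$ together with $I_0 = \{i \mid v_0 \in F_i\}$, which shows $\varphi$ is a bijection; note it restricts to a bijection on histories as well, and on histories ending in a Player~$i$ vertex by item~(iii) of the construction. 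The crucial gain-preservation fact is that $\Occ(\rho^X) \cap F^X_i \neq \emptyset$ iff some $I_n$ contains $i$ iff $\Occ(\varphi(\rho^X)) \cap F_i \neq \emptyset$, using item~(iv); hence $\Gain^X(\rho^X) = \Gain(\varphi(\rho^X))$ for every play from the initial vertex. The same holds in subgames reachable from $x_0$, because the second component is determined by the history.

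Next I would transport strategies. Given an SPE $\sigma$ in $\initG$ with gain profile $p$, define $\tau_i$ in $\initX$ on each $h^X \in \Hist_i(x_0)$ by $\tau_i(h^X) = (\sigma_i(\varphi(h^X)), I')$ where $I'$ is the unique label dictated by the edge relation; symmetrically, from an SPE $\tau$ in $\initX$ define $\sigma_i(h) = $ the first component of $\tau_i(\varphi^{-1}(h))$. In both cases $\varphi$ maps consistent plays to consistent plays and outcomes to outcomes, so $\Gain^X(\outcome{\tau}{x_0}) = \Gain(\outcome{\sigma}{v_0}) = p$. For the SPE property I would argue subgame by subgame: a history $h^X v^X \in \Hist(x_0)$ corresponds to $h v := \varphi(h^X v^X) \in \Hist(v_0)$, and since the second component of $v^X$ is forced by $hv$, the subgame $(\eAr_{\restriction h^X}, \ldots)$ is (isomorphic via the analogous projection to) the extended-style copy of $(\Ar_{\restriction h}, \ldots)$; a profitable deviation for Player~$i$ in one subgame pushes forward, via $\varphi$ and gain preservation, to a profitable deviation in the other. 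Hence $\tau$ (resp.\ $\sigma$) is an NE in every subgame, i.e.\ an SPE.

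The main obstacle — and the only place that needs care — is the bookkeeping of the second component under the subgame operation: one must check that the "extended" structure is closed under taking subgames from reachable vertices, i.e.\ that for any $x_0$-history $h^X$ ending in $(v,I)$, the label $I$ is exactly $I_0(v_0) \cup \{i \mid v \text{ or some earlier vertex of } \varphi(h^X) \in F_i\}$, so that comparing a deviation in $\initX$ with the corresponding one in $\initG$ is legitimate. Everything else is routine diagram-chasing through the bijection $\varphi$. An alternative, slicker route would be to exhibit $\sim$ defined on $V^X \sqcup V$ by $(v,I) \sim v$ and invoke Theorem~\ref{thm:eqSPE}, but $\varphi$ is not a bisimulation \emph{equivalence} on a single game in the required sense (the relation is not on $V \times V$), so I would keep the direct argument above, which is short and self-contained.
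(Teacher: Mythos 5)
Your proposal is correct and follows exactly the route the paper intends: it makes precise the one-to-one correspondence between plays of $\initG$ and $\initX$ and the preservation of gain profiles (also in subgames, where the second component is determined by the history), and transports SPEs along this bijection in both directions. The paper itself only sketches this as an immediate consequence of those two observations, so your write-up is a faithful elaboration of the same argument.
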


In the rest of this section, we will write $v \in V^X$ (instead of $(u,I)$) and we depict by $I(v)$ the set $I$ of the players who have already visited their target set.
\subsubsection{Outcome characterization} Once this extended game is build, we want a way to decide whether a play in this game corresponds to the outcome of an SPE or not: we want an \emph{SPE outcome characterization}. The vertices of the extended game are labeled thanks to a \emph{labeling function} $\lambda^*: V^X \rightarrow \{0,1\}$. For a vertex $v \in V^X$ such that $v \in V^X_i$, the value $1$  imposes that Player~$i$ should reach his target set if he follows an SPE from $v$  and the value $0$ does not impose any constraint on the gain of Player~$i$ from $v$.

The labeling function $\lambda^*$ is obtained thanks to an iterative procedure such that each step $k$ of the iteration provides a $\lambda^k$-labeling function. This procedure is based on the notion of \emph{$\lambda$-consistent play}: that is a play which sastifies the constraints given by $\lambda$ all along it.

\begin{definition}
	Let  $\lambda: V^X \rightarrow \{0,1\}$ be a labeling function and $\rho \in \Plays_{\eAr}$, we say that $\rho$ is \emph{$\lambda$-consistent} if for each $i \in \Pi$ and for each $n \in \mathbb{N}$ such that $\rho_n \in V^X_i$: $\Gain^X_i(\rho_{\geq n}) \geq \lambda(\rho_n)$. We write $\rho \models \lambda$.
\end{definition}

The iterative computation of the sequence $(\lambda^k)_{k\in \mathbb{N}}$ works as follows: \emph{(i)} at step $0$, for each $v \in V^X$, $\lambda^0(v) = 0$, \emph{(ii)} at step $k+1$, for each $v \in V^X$, by assuming that $v \in V^X_i$, $\lambda^{k+1}(v) = \max_{v'\in \Succ(v)} \min \{\Gain^X_i(\rho) \mid \rho \in \Plays_{\eAr}(v') \wedge \rho \models \lambda^k\}$ and \emph{(iii)} we stop when we find $n \in \mathbb{N}$  such that for each $v\in V^X$, $\lambda^{n+1}(v) = \lambda^{n}(v)$. The least natural number $k^*$ which satisfies $\emph{(iii)}$ is called the \emph{fixpoint} of $(\lambda^k)_{k\in\mathbb{N}}$ and $\lambda^*$ is defined as $\lambda^{k^*}$. The following lemma states that this natural number exists and so that the iterative procedure stops.

\begin{lemma}
	\label{lemma:fix}
	The sequence $(\lambda^k)_{k\in \mathbb{N}}$ reaches a fixpoint in $k^*\in \mathbb{N}$. Moreover, $k^*$ is at most equal to $|V|\cdot 2^{|\Pi|}$.
\end{lemma}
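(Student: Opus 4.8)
## Proof Plan for Lemma~\ref{lemma:fix}

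The plan is to establish two facts: first, that the sequence $(\lambda^k)_{k\in\mathbb{N}}$ is \emph{monotone} in a suitable sense, and second, that once monotonicity is known, the bound on $k^*$ follows from a simple counting argument on how much the labeling can change. The natural order to work with is pointwise comparison of labeling functions: write $\lambda \leq \lambda'$ if $\lambda(v) \leq \lambda'(v)$ for all $v \in V^X$. The key monotonicity claim is that $\lambda^k \leq \lambda^{k+1}$ for all $k$, which I would prove by induction on $k$. The base case $\lambda^0 \leq \lambda^1$ is immediate since $\lambda^0 \equiv 0$. For the inductive step, the crucial observation is an \emph{antitone} dependence: if $\lambda \leq \lambda'$, then the set of $\lambda'$-consistent plays from a vertex $v'$ is contained in the set of $\lambda$-consistent plays from $v'$ (a stronger constraint admits fewer plays), so $\min\{\Gain^X_i(\rho) \mid \rho \in \Plays_{\eAr}(v') \wedge \rho \models \lambda\} \leq \min\{\Gain^X_i(\rho) \mid \rho \in \Plays_{\eAr}(v') \wedge \rho \models \lambda'\}$. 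Taking the $\max$ over $v' \in \Succ(v)$ preserves this inequality, so the operator $\lambda \mapsto \lambda^{+}$ computing the next iterate is monotone; combined with $\lambda^0 \leq \lambda^1$ this gives $\lambda^k \leq \lambda^{k+1}$ for all $k$ by induction. (One should be slightly careful that the $\min$ is over a nonempty set and is well-defined; since every vertex of the arena has an outgoing edge, $\Plays_{\eAr}(v')$ is nonempty, and the $\lambda^0$-consistent plays always include at least... — more simply, the values lie in $\{0,1\}$ so the $\min$ is attained.)

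Once monotonicity is in hand, I argue as follows. If $\lambda^{k+1} = \lambda^k$ for some $k$, then since the next iterate is determined entirely by the current one, $\lambda^{k+1} = \lambda^k$ forces $\lambda^{k+2} = \lambda^{k+1}$, and so the sequence is stationary from $k$ onward; thus a fixpoint, once reached, persists, and it suffices to bound the first index at which consecutive iterates coincide. Now suppose, for contradiction, that $\lambda^{k+1} \neq \lambda^k$ for every $k \in \{0, 1, \dots, N-1\}$ where $N = |V|\cdot 2^{|\Pi|} = |V^X|$. By monotonicity, $\lambda^k \leq \lambda^{k+1}$ and $\lambda^k \neq \lambda^{k+1}$, so at each such step at least one vertex $v$ has its label strictly increase, i.e.\ from $0$ to $1$. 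Since labels only ever increase and are bounded by $1$, each vertex can change at most once across the whole sequence. Hence the total number of strict-increase steps is at most $|V^X| = |V|\cdot 2^{|\Pi|}$, and so there cannot be $|V^X|$ consecutive steps each of which is a strict increase followed by yet another — more precisely, after at most $|V^X|$ strict increases no further change is possible, so $\lambda^{N+1} = \lambda^N$ for $N = |V^X|$. This yields $k^* \leq |V|\cdot 2^{|\Pi|}$.

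The main obstacle — really the only non-routine point — is getting the direction of monotonicity right and verifying the antitone behavior of the $\lambda$-consistency constraint cleanly: one must check that strengthening $\lambda$ shrinks the set of consistent plays \emph{from every vertex simultaneously}, so that the nested $\min$/$\max$ in the definition of the iterate composes correctly into a monotone operator. Everything after that is a finite counting argument. A minor additional care point is handling the edge case where $\Succ(v)$ interacts with the bookkeeping component $I(v)$ of the extended game, but since the $I$-component only grows along edges and plays out of $v'$ are taken in the extended arena, this causes no trouble; the argument is purely about the $\{0,1\}$-valued labeling and its monotone update.
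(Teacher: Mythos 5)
Your proof is correct and follows essentially the same route as the paper's own (sketched) argument: the labels in $\{0,1\}$ only ever increase, so at most $|V^X|=|V|\cdot 2^{|\Pi|}$ strict changes can occur before two consecutive iterates coincide, at which point the sequence is stationary. The only difference is that you explicitly justify the monotonicity step via the antitone dependence of $\lambda$-consistency on $\lambda$, which the paper's proof sketch takes for granted; that is a welcome addition rather than a divergence.
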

\begin{proof}[Proof sketch]
	In the initialization step, all the vertex values are equal to
        $0$. Then at each iteration, \emph{(i)} if the value of a
        vertex was equal to $1$ in the previous step, then it stays
        equal to $1$ all along the procedure and \emph{(ii)} if the
        value of the vertex was equal to $0$ then it either stays
        equal to $0$ (for this iteration step) or it becomes equal to
        $1$ (for all the next steps thanks to \emph{(i)}). At each
        step, at least one vertex value changes and when no value
        changes the procedure has reached a fixpoint which corresponds
        to the values of $\lambda^*$. Thus, it means that $\lambda^*$
        is obtained in at most $|V| \times 2^{|\Pi|}$ steps.
\end{proof}

As claimed in the following proposition, the labeling function $\lambda^*$ exactly caracterizes the set of SPE outcomes. The proof is quite the same as for the quantitative setting (\cite{BrihayeBGRB19}).

\begin{proposition}
	\label{prop:critOut}
	Let $\initX$ be the extended game of a finite reachability game $\initG$ and let $\rho^X \in \Plays_{\eAr}(x_0)$ be a play, 
	there exists an SPE $\sigma$ with outcome $\rho^X$ in $\initX$ if and only if $\rho^X$ is $\lambda^*$-consistent.
\end{proposition}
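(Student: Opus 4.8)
The plan is to mirror the standard SPE-outcome characterization argument (as in~\cite{BrihayeBGRB19,BrihayeBGR18}), proving the two implications separately. For the forward direction, I would assume that $\rho^X = \outcome{\sigma}{x_0}$ for some SPE $\sigma$ in $\initX$ and show $\rho^X \models \lambda^*$. Suppose toward a contradiction that $\rho^X$ is not $\lambda^*$-consistent: there exist $i \in \Pi$ and $n \in \mathbb{N}$ with $\rho^X_n \in V^X_i$ and $\Gain^X_i(\rho^X_{\geq n}) < \lambda^*(\rho^X_n)$, i.e.\ $\lambda^*(\rho^X_n) = 1$ and Player~$i$ loses from position $n$ along $\rho^X$. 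By definition of the fixpoint $\lambda^* = \lambda^{k^*}$, there is a successor $v'$ of $\rho^X_n$ such that every $\lambda^*$-consistent play from $v'$ gives Player~$i$ gain $1$; the crux is then to build, out of such $\lambda^*$-consistent plays, a bona fide SPE $\tau$ in the subgame $(\mathcal{X}_{\restriction h},v')$ where $h = \rho^X_0\cdots\rho^X_n$. Granting that (this is essentially the backward direction applied to a subgame, so I would structure the proof to prove the backward direction first, then reuse it), deviating at history $h$ to $v'$ and then playing according to $\tau$ is a profitable deviation for Player~$i$ in the subgame $(\mathcal{X}_{\restriction \rho^X_{\le n}}, \rho^X_n)$, contradicting that $\sigma_{\restriction \rho^X_{\le n-1}}$ is an NE; hence $\rho^X \models \lambda^*$.

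For the backward direction — which I regard as the heart of the argument — assume $\rho^X$ is $\lambda^*$-consistent and construct an SPE $\sigma$ with $\outcome{\sigma}{x_0} = \rho^X$. The idea is the classical "play the prescribed path, and upon any deviation switch to a $\lambda^*$-consistent punishing play": as long as all players have followed $\rho^X$, $\sigma$ reproduces $\rho^X$; after the first deviation to some history $hv$, $\sigma_{\restriction h}$ follows (the tail of) a fixed $\lambda^*$-consistent play from $v$, and similarly for deviations from such punishing plays — so by induction on the number of deviations, every history is mapped to a continuation that is the tail of a $\lambda^*$-consistent play. One must check this is well-defined, which uses that from every vertex $v'$ of the extended game there exists at least one $\lambda^*$-consistent play from $v'$ (provable by picking, at each $v'' \in V^X_j$, a successor realizing the max in the definition of $\lambda^*(v'') = \lambda^{k^*}(v'')$, using $\lambda^{k^*} = \lambda^{k^*+1}$). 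Then I would verify the NE condition in every subgame $(\mathcal{X}_{\restriction h},v)$: the outcome of $\sigma_{\restriction h}$ is by construction a $\lambda^*$-consistent play from $v$, so for any player $j$ owning a vertex visited along it, if that player loses then $\lambda^*$ at that vertex is $0$, and by the defining equation for $\lambda^{k^*+1} = \lambda^{k^*}$ no single deviation of Player~$j$ can lead to a position from which all $\lambda^*$-consistent continuations give him gain $1$ — but after Player~$j$'s deviation the strategy profile again produces a $\lambda^*$-consistent play, so $j$ cannot improve; thus $\sigma_{\restriction h}$ is an NE.

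The main obstacle is making the deviation bookkeeping in the backward construction precise: one needs a clean definition of $\sigma$ as a function on all of $\Hist(x_0)$ (tracking the "current reference play" after a sequence of deviations) and a careful inductive proof that the outcome of $\sigma_{\restriction h}$ is always $\lambda^*$-consistent, together with the one-step argument showing no profitable unilateral deviation exists — this last step is exactly where the fixpoint equation $\lambda^{k^*} = \lambda^{k^*+1}$ is used, and where one must be careful that a deviating player faces, after his deviation, a continuation that is itself $\lambda^*$-consistent so that the "one successor" bound in the definition of $\lambda^{k^*+1}$ genuinely caps what he can gain. All the quantitative estimates of~\cite{BrihayeBGRB19} collapse to the Boolean $\{0,1\}$ case here, so no new calculation is needed beyond transcribing that argument; I would cite~\cite{BrihayeBGRB19} for the routine verifications and give full detail only for the Boolean-specific simplifications.
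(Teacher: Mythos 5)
Your backward direction is essentially the right construction, but its key justification contains a non sequitur, and your forward direction as stated does not go through.

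\emph{Forward direction.} You propose to exhibit a profitable deviation for Player~$i$ at $\rho^X_n$ by moving to the successor $v'$ realizing the max in the fixpoint equation and then ``playing according to'' an SPE $\tau$ of the subgame rooted at $v'$. But a profitable deviation is \emph{unilateral}: after Player~$i$ deviates, the other players keep playing $\sigma_{-i}$, not $\tau_{-i}$. The play $\outcome{\tau_i,\sigma_{-i\restriction hv'}}{v'}$ need be neither $\lambda^*$-consistent nor winning for Player~$i$, so constructing $\tau$ buys you nothing, and the appeal to the backward direction is misplaced. The correct argument is an induction on $k$: one proves that for \emph{every} history $hv$ the outcome of $\sigma_{\restriction h}$ from $v$ is $\lambda^k$-consistent, for all $k$. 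In the inductive step, Player~$i$ deviates only at the single vertex $\rho^X_n$, moving to $v'$, and thereafter conforms to $\sigma_i$; the continuation is then $\outcome{\sigma_{\restriction hv'}}{v'}$, i.e.\ the outcome of the SPE $\sigma_{\restriction hv'}$ of the deeper subgame, hence $\lambda^k$-consistent by the induction hypothesis, hence of gain $1$ for Player~$i$ by the choice of $v'$. This contradicts the NE property of $\sigma_{\restriction h}$ without ever invoking the converse implication.

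\emph{Backward direction.} You conclude that Player~$j$ cannot improve because ``after Player~$j$'s deviation the strategy profile again produces a $\lambda^*$-consistent play''. This is not sufficient. The fixpoint equation only tells you that for each successor $v'$ of a vertex $v\in V^X_j$ with $\lambda^*(v)=0$, the \emph{minimum} of $\Gain^X_j$ over $\lambda^*$-consistent plays from $v'$ is $0$, i.e.\ that \emph{some} $\lambda^*$-consistent play from $v'$ loses for $j$; an arbitrary $\lambda^*$-consistent punishing play from $v'$ may perfectly well give $j$ gain $1$, in which case your deviation would be profitable after all. The punishing play installed after a deviation of Player~$j$ to $v'$ must therefore be chosen so as to \emph{realize} that minimum, and this choice necessarily depends on the identity of the deviator, not only on the vertex reached. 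With that refinement --- together with the bookkeeping for repeated deviations that you already flag, and the observation that $j$ cannot enter $F^X_j$ at the very moment of a deviation from a vertex labeled $0$ (otherwise the min, hence $\lambda^*(v)$, would be $1$) --- the construction does yield an SPE with outcome $\rho^X$. Note that the paper itself gives no proof of this proposition and defers entirely to \cite{BrihayeBGRB19}; the argument there has exactly the structure sketched above, so your outline is recoverable once these two points are repaired.
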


\subsubsection{Complexity}
Proposition~\ref{prop:critOut} allows us to prove
Theorem~\ref{thm:exptime}. Indeed, we only have to find a play in the
extended game which is $\lambda^*$-consistent and with a gain profile
which satisfies the constrained given by the decision problem.

\begin{proof}[Proof sketch of Theorem~\ref{thm:exptime}]
Let $\initG = (\Ar, \allGain, (F_i)_{i \in \Pi})$ be a reachability
game and let $\initX = (X, (\Gain^X_i)_{i\in \Pi}, (F^X_i)_{i\in
  \Pi})$ be its associated extended game. The game $\initX$ is build
from $\initG$ in time at most exponential in the number of players and
polynomial in the size of the transistion system of $\Ar$.

The proof will be organised in three steps whose respective proofs
will rely on the previous step(s): \emph{(i)} given a gain profile $p\in \{0,1\}^{|\Pi|}$, given $\mathcal{L}^k= \{
\lambda^k(v) \mid v \in V^X \}$ for some $k \in \mathbb{N}$ and given
some $v \in V^X$, we show that we can decide in the required
complexity the existence of a play which is $\lambda^k$-consistent,
beginning in $v$ and with gain profile $p$; \emph{(ii)} given
$\mathcal{L}^k$ for some $k \in \mathbb{N}$, we show that the
computation of $\mathcal{L}^{k+1}$ can be performed within the
required complexity; and finally \emph{(iii)} given $x,y \in
\{0,1\}^{|\Pi|}$, we show that the existence of a
$\lambda^*$-consistent play beginning in $x_0$ with a gain profile $p$
such that $x \leq p \leq y$ can be decided within the required
complexity.

\begin{itemize}
\item \textbf{Proof of $\emph{(i)}$}: Given $\mathcal{L}^k$, $v \in
  V^X$ and $p \in \{0,1\}^{|\Pi|}$, we want to know if there exists a
  play $\rho \in \Plays_{\eAr}(v)$ which is $\lambda^k$-consistent and
  with gain profile $p$. If a play $\rho$ is such that $\Gain^X(\rho)
  = p$, then for each $i \in \Pi$ such that $p_i = 1$, the condition
  of being a $\lambda^k$-consistent play is satisfied. For those such
  that $p_i =0$, for each $n \in \mathbb{N}$ such that $\rho_n \in
  V^X_i$, $\Gain^X_i(\rho_{\geq n}) = 0$ should be greater than
  $\lambda^k(\rho_n)$. This condition is satisfied if and only if for
  each $\rho_n \in V^X_i$, $\lambda^k(\rho_n) \neq 1$. Thus, we remove
  from $\initX$ all vertices (and all related edges) $v\in V^X_i$ such
  that $\lambda^k(v)=1$, for each player~$i$ such that $p_i = 0$.
  Then, we only have to check if there exists a play $\rho$ which
  begin in $v$ and with gain profile $p$ in this modified extended
  reachability game. This can be done in $O(2^{|\Pi|}\cdot (|V^X|
  +|E^X|))$ (\cite[Lemma 23]{BrihayeBGR18}), thus this procedure runs
  in time at most exponential in the number of players and polynomial
  in the size of the transition system of $\Ar$.
		
\item\textbf{Proof of $\emph{(ii)}$}: Given $\mathcal{L}^k$, we want
  to compute $\mathcal{L}^{k+1}$. For each $v \in V^X$,
  $\lambda^{k+1}(v) = \max_{v'\in\Succ(v)} \min\{\Gain^X_i(\rho) \mid
  \rho \in \Plays_{\eAr}(v') \wedge \rho \models \lambda^k \}$ (by
  assuming that $v \in V^X_i$). Thus for each $v' \in \Succ(v)$, we
  have to compute $min = \min\{\Gain^X_i(\rho) \mid \rho \in
  \Plays_{\eAr}(v') \wedge \rho \models \lambda^k \}$. But $min = 0$
  if and only if there exists $\rho \in \Plays_{\eAr}(v')$ which is
  $\lambda^k$-consistent and such that $\Gain^X_i(\rho)=0$. Thus for
  each $p \in \{0,1\}^{|\Pi|}$ such that $p_i =0$, we use point
  $\emph{(i)}$ to decide if $min =0$. From that follows a procedure
  which runs in $O(|V^X|\cdot |V^X| \cdot 2^{|\Pi|} \cdot
  2^{|\Pi|}\cdot (|V^X| +|E^X|))$ (running time at most exponential in
  the number of players and polynomial in the size of the transition
  system $\Ar$). 
		
\item\textbf{Proof of $\emph{(iii)}$}: It remains to prove that the
  existence of a $\lambda^*$-consistent play beginning in $x_0$ with a
  gain profile $p$ such that $x \leq p \leq y$ can be decided within
  the required complexity. In order to do so, we evaluate the
  complexity to obtain $\lambda^*$. First, we build $\mathcal{L}^0
  = \{ \lambda^0(v) = 0 \mid v \in V^X\}$ in $O(|V^X|)$ time. Then, we
  apply point \emph{(ii)} at most $|V|\cdot 2^{|\Pi|}$ times (by
  Lemma~\ref{lemma:fix}) to obtain $\mathcal{L}^*$. Given $x,y \in
  \{0,1\}^{|\Pi|}$, we consider each $p \in \{0,1\}^{|\Pi|}$ such that
  $x\leq p\leq y$ (at most $2^{|\Pi|}$ such ones) and we use point
  \emph{(i)} to check if there exists a play which begins in $x_0$
  with gain profile $p$ and which is $\lambda^*$-consistent. This can
  be done in running time at most exponential in the number of players
  and polynomial in the size of the transition system of $\Ar$.
\end{itemize}
We conclude the proof by applying Proposition~\ref{prop:critOut}.
\end{proof}

\section{Application to Timed Games}
\label{section:timedGames}

In this section, we are interested in models which are enriched with
\emph{clocks} and \emph{clock guards} in order to consider time
elapsing. \emph{Timed automata} \cite{RajeevAlur.1994} are well known
among such models. We recall some of their classical concepts, then we
explain how \emph{(turn-based) timed games} derive from timed
automata.

\subsection{Timed automata and timed games}

In this section, we use the following notations.  The set $\Clock = \{
c_1, \ldots, c_k \}$ denotes a set of $k$ \emph{clocks}.  A
\emph{clock valuation} is a function $\nu : \Clock \rightarrow
\mathbb{R}^+$. The set of clock valuation is depicted by $\vSet$.
Given a clock valuation $\nu$, for $i \in \{1,\ldots, k\}$, we
sometimes write $\nu_i$ instead of $\nu(c_i)$.  Given a clock
valuation $\nu$ and $d \in \mathbb{R}^+$, $\nu + d$ denote the clock
valuation $\nu+d: \Clock \rightarrow \mathbb{R}^+$ such that
$(\nu+d)(c_i) = \nu(c_i) + d$ for each $c_i \in \Clock$.  A
\emph{guard} is any finite conjunctions of expressions of the form $c_i
\Ordre x $ where $c_i$ is a clock, $x \in \mathbb{N}$ is a natural
number and $\Ordre$ is one of the symbols $\{ \leq, <, = , > , \geq
\}$.  We denote by $\Guard$ the set of guards. Let $g$ be a guard and
$\nu$ be a clock valuation, notation $\nu \models g$ means that
$(\nu_1, \ldots, \nu_k)$ satisfies $g$.  A \emph{reset} $Y \in
2^{\Clock}$ indicates which clocks are reset to $0$.  We denote by
$[Y\leftarrow 0]\nu$ the valuation $\nu'$ such that for each $c \in
Y$, $\nu'(c) = 0$ and for each $c \in \Clock\backslash Y$, $\nu'(c) =
\nu(c)$.\\

A \emph{timed automaton} (TA) is a tupple $(\Aut,\ell_0) = (\Sigma,\Loc,\Trans,\Clock)$ where: \emph{(i)} $\Sigma$ is a finite alphabet;
		\emph{(ii)} $\Loc$ is a finite set of \emph{locations};
		\emph{(iii)} $\Clock$ is a finite set of \emph{clocks};
		\emph{(iv)} $\Trans \subseteq \Loc \times \Sigma \times \Guard \times 2^{\Clock} \times \Loc$ a finite set of transitions;
		and \emph{(v)} $\ell_0 \in \Loc$ an initial location.
Additionnaly, we may equipped a timed automaton with a set of players and partition the locations between them. It results in a \emph{players partitioned timed automaton}.

\begin{definition}[(Reachability) Players partitioned timed automaton]
		A \emph{players partioned timed automaton} (PPTA) $(\Aut,\ell_0) = (\Sigma,\Loc,\Trans,\Clock, \Pi, (\Loc_i)_{i\in \Pi})$ is a timed automaton equipped with: \emph{(i)} $\Pi$ a finite set of players and \emph{(ii)} $(\Loc_i)_{i\in \Pi}$ a partition of the locations between the players.
		
		If $(\Aut,\ell_0)$ is equipped with a target set $\Goal_i \subseteq \Loc$ for each player $i\in \Pi$, we call it a \emph{reachability} PPTA.
\end{definition}

The semantic of a timed automaton $(\Aut,\ell_0)$ is given by its associated transition system $T_{\Aut} = (\Sigma, V, E)$ where:
\emph{(i)} $V = \Loc \times \vSet $ is a set of vertices of the form $(\ell,\nu)$ where $\ell$ is a location and $\nu: \Clock \rightarrow \mathbb{R}^+$ is a clock valuation;
		and \emph{(ii)} $E \subseteq V \times \Sigma \times V$ is such that $((\ell, \nu), a, (\ell',\nu')) \in E$ if $(\ell,a, g, Y, \ell')\in \Trans$ for some $g \in \Guard$ and some $Y \in 2^{\Clock}$, and there exists $d \in \mathbb{R}^+$ such that: \emph{(1)} for each $x \in X\backslash Y$: $\nu'(x) = \nu(x)+ d$ \textbf{(time elapsing)}; \emph{(2)} for each $x \in Y$: $\nu'(x) = 0 $ \textbf{(clocks resetting)}; \emph{(3)} $\nu + d \models g$ \textbf{(respect of the guard)}.


In the same way, the semantic of a PPTA $(\Aut,\ell_0)$ is given by its associated game $(\mathcal{G}_{\Aut},v_0)$.
\begin{definition}[(Reachability) Timed games $\mathcal{G}_{\Aut}$]
	Let $(\Aut,\ell_0) = (\Sigma,\Loc,\Trans,\Clock, \Pi, (\Loc_i)_{i\in \Pi})$ be a PPTA, its associated game $(\mathcal{G}_{\Aut},v_0) = (\timeAr, \allGain)$, called \emph{timed game}, is such that: \emph{(i)} $\timeAr = (\Sigma, V, E, \Pi, (V_i)_{i\in \Pi})$ where $T_{\Aut} = (\Sigma, V, E)$ is the associated transition system of $(\Aut,\ell_0)$ and, for each $i \in \Pi$, $(\ell, \nu) \in V_i$ if and only if $\ell \in \Loc_i$; \emph{(ii)} for each $i\in \Pi$, $\Gain_i: \Plays_{\timeAr} \rightarrow \{0,1\}$ is a gain function; \emph{(iii)} $v_0 = (\ell_0, \mathbf{0})$ where $\mathbf{0}$ is the clock valuation such that for all $c\in \Clock$, $\mathbf{0}(c) = 0 $.
	
		
		If $(\Aut,\ell_0)$ is a reachability PPTA, its associated timed game is a reachability game $(\mathcal{G}_{\Aut},v_0) = (\timeAr, \allGain, (F_i)_{i\in \Pi})$ such that for each $i\in \Pi$, $(\ell,\nu) \in F_i$ if and only if $\ell \in \Goal_i$. We call this game a \emph{reachability timed game}.		
\end{definition} 

Thus, in a timed game, when it is the turn of Player~$i$ to play, if
the play is in location $\ell$, he has to choose a delay $d \in
\mathbb{R}^+$ and a next location $\ell'$ such that $(\ell,a,g,Y,
\ell') \in \Trans$ for some $a \in \Sigma$, $g \in \Guard$ and $Y \in
2^{\Clock}$. If the choice of $d$ respects the guard $g$, then the
choice of Player~$i$ is valid: the clock valuation evolves according
to the past clock valuation, $d$ and $Y$ and location $\ell'$ is
reached. Then, the play continues.
		
\subsection{Regions and region games}
In this section, we consider a bisimulation equivalence on $T_{\Aut}$
(the classical time-abstract bisimulation from~\cite{RajeevAlur.1994})
which allows us to solve the constrained existence in the quotient of
the original timed game (the region game). All along this section we
use the following notations. We denote by $x_i$ the maximum value in
the guard for clock $c_i$. For all, positive number $d \in
\mathbb{R}^+$, $\lfloor d \rfloor$ is the integral part of $d$ and
$\overline{d}$ is fractional part of $d$.

\begin{definition}[$\approx$ and region]
	\label{def:approx}
	\begin{itemize}
		\item Two clock valuations $\nu$ and $\nu'$ are equivalent (written $\nu \approx \nu'$) iff: \emph{(i)} $\lfloor \nu_i \rfloor =  \lfloor \nu'_i \rfloor$ or $\nu_i, \nu'_i > x_i$, for all $i \in \{1, \ldots, k \}$; \emph{(ii)} $\overline{\nu_i} = 0$ iff $\overline{\nu'_i}$, for all $i \in \{1, \ldots, k \}$ with $v_i \leq x_i$ and \emph{(iii)} $\overline{\nu_i} \leq \overline{\nu_j}$ iff $\overline{\nu'_i} \leq \overline{\nu'_j}$ for all $i \neq j \in \{1, \ldots, k\}$ with $\nu_j \leq x_j$ and $\nu_i \leq x_i$.		
			
		\item We extend the equivalence relation to the states ($\approx \subseteq V \times V$) : $(\ell, \nu) \approx (\ell', \nu')$ iff $ \ell = \ell'$ and $\nu \approx \nu'$;
		\item A \emph{region} $r$ is an equivalence class for some $v \in V$: $r = [v]_{\approx} $.
		\end{itemize}
\end{definition}
	
	This equivalence relation on clocks and its extension to states of $T_{\Aut}$ is usual and the following result is well known \cite{RajeevAlur.1994}.
	
\begin{lemma}[\cite{RajeevAlur.1994}]
	Let $(\Aut,\ell_0)$ be a TA, $\approx \subseteq V \times V$ is a bisimulation equivalence on $T_{\Aut}$.
\end{lemma}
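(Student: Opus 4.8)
The statement is the classical time-abstract (region) bisimulation of~\cite{RajeevAlur.1994}, so the plan is to reproduce that argument. First I would check that $\approx$ is an equivalence relation on $\vSet$, hence on $V$: reflexivity, symmetry and transitivity follow by a routine (if slightly tedious) unfolding of Definition~\ref{def:approx}, since conditions \emph{(i)}--\emph{(iii)} only compare integral parts, the ``fractional part is $0$'' predicate, and orderings of fractional parts, all of which are reflexive, symmetric and transitive, while the state-level extension only adds the equality $\ell = \ell'$. Next, since $\approx$ is symmetric we have $\approx^{-1} = \approx$, so it suffices to prove that $\approx$ is a simulation of $T_{\Aut}$ by itself: the totality condition \emph{(ii)} of a simulation holds by reflexivity (take $v_2 = v_1$), and only the transfer condition \emph{(i)} needs work.

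For the transfer condition I would isolate three facts about $\approx$ on clock valuations, each proved by unfolding Definition~\ref{def:approx}. \textbf{(Resets)}: if $\nu \approx \nu'$ then $[Y \leftarrow 0]\nu \approx [Y \leftarrow 0]\nu'$ for every $Y \in 2^{\Clock}$, because the clocks in $Y$ get value $0$ on both sides and conditions \emph{(i)}--\emph{(iii)} restricted to the clocks outside $Y$ are inherited from $\nu \approx \nu'$, the new zero-valued clocks contributing only relations that hold identically. \textbf{(Guards)}: if $\nu \approx \nu'$ then $\nu \models g \Leftrightarrow \nu' \models g$ for every $g \in \Guard$, because each atomic constraint $c_i \Ordre x$ occurring in a guard has $x \leq x_i$, and $\approx$ preserves the integral part and the ``fractional part is $0$'' predicate of every clock whose value is $\leq x_i$, while a clock above $x_i$ exceeds every relevant constant on both sides. \textbf{(Time-elapse)}: if $\nu \approx \nu'$ then for every $d \in \mathbb{R}^+$ there is $d' \in \mathbb{R}^+$ with $\nu + d \approx \nu' + d'$.

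Granting these, the transfer condition goes through as follows. Given $(\ell,\nu) \approx (\ell,\nu')$ (the locations agree by the state-level definition of $\approx$) and a transition $(\ell,\nu) \xrightarrow{a} (\ell'',\nu'')$ of $T_{\Aut}$, I would unfold it into a transition $(\ell,a,g,Y,\ell'') \in \Trans$ together with a delay $d \in \mathbb{R}^+$ such that $\nu + d \models g$ and $\nu'' = [Y \leftarrow 0](\nu + d)$; then use \textbf{(Time-elapse)} to get $d'$ with $\nu + d \approx \nu' + d'$, use \textbf{(Guards)} to see $\nu' + d' \models g$ so the same transition fires from $(\ell,\nu')$ with delay $d'$, yielding $(\ell,\nu') \xrightarrow{a} (\ell'', [Y \leftarrow 0](\nu' + d'))$, and use \textbf{(Resets)} to conclude $(\ell'',\nu'') \approx (\ell'', [Y \leftarrow 0](\nu' + d'))$. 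This is exactly condition \emph{(i)} of a simulation, which (together with the equivalence-relation check and symmetry) finishes the proof.

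The main obstacle is \textbf{(Time-elapse)}, the heart of the region construction. I would prove it by tracking the order in which the clocks reach their next integer value as time elapses from $\nu$: this order depends only on the cyclic order of the fractional parts $\overline{\nu_i}$ and on which of them are $0$ — exactly the data recorded by conditions \emph{(ii)}--\emph{(iii)} of Definition~\ref{def:approx} — and then build $d'$ incrementally so that $\nu' + d'$ crosses the same sequence of integer boundaries in the same order, each fractional increment chosen small enough that clocks already above their maximal constant $x_i$ stay above it. Formally this is an induction on the number of boundary-crossing events needed to realize $\nu + d$, each event moving within the current region or to an adjacent one and being mirrored on the $\nu'$ side; the bookkeeping is standard, so I would cite~\cite{RajeevAlur.1994} and omit it.
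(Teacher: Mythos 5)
Your proposal is correct and is precisely the classical region-bisimulation argument of Alur and Dill, which is exactly what the paper relies on: it states this lemma with a citation to~\cite{RajeevAlur.1994} and gives no proof of its own. Your decomposition into the equivalence-relation check plus the three transfer facts (resets, guards, time-elapse) is the standard route, and deferring the fractional-part bookkeeping of the time-elapse step to the original reference is consistent with how the paper treats it.
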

		
		It means that if $(\mathcal{G}_{\Aut},v_0)$ is a (reachability) timed game, $\approx$ is a bisimulation equivalence on it. Moreover, it respects the partition. Thus, we can consider the (reachability) quotient game of this game. We call this game the \emph{(reachability) region game}. Notice that $\approx$ respects the target sets, so the reachability quotient game is defined as in Definition~\ref{def:quotientReach}.
		
\begin{definition}[(Reachability) region game]
			Let $(\mathcal{G}_{\Aut},v_0)$ be a (reachability) timed game and $\approx \subseteq V \times V$ be the bisimulation equivalence defined in Definition~\ref{def:approx}, its associated \emph{(reachability) region game} is its associated (reachability) quotient game $(\tilde{\tilde{\mathcal{G}}}_{\Aut},\Class{v_0})$.
\end{definition}
			
We recall~\cite{RajeevAlur.1994} that the size of $\tilde{\tilde{T}}_{\Aut}$, \emph{i.e.,} its number of states (regions) and edges, is in $O((|V|+|\rightarrow|)\cdot 2^{|\delta (\Aut)|})$ where $\delta (\Aut)$ is the binary encoding of the constants (guards and costs) appearing in $\Aut$. Thus $|\tilde{\tilde{T}}_{\Aut}|$ is in $O(2^{|\Aut|})$ where $|\Aut|$ takes into account the locations, edges and constants of $\Aut$. From this follows the following lemma.

\begin{lemma}
	\label{lemma:finiteReach}
	The (reachability) region game $(\tilde{\tilde{\mathcal{G}}}_{\Aut}, \Class{v_0})$ is a finite (reachability) game.
\end{lemma}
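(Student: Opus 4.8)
The plan is to verify directly that $(\tilde{\tilde{\mathcal{G}}}_{\Aut},\Class{v_0})$ satisfies the definition of a finite reachability game, which by the definition of a finite game amounts to checking that the set of states and the set of edges of its transition system $\tilde{\tilde{T}}_{\Aut}$ are both finite.

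First I would recall that, by construction, $\tilde{\tilde{T}}_{\Aut}$ is the quotient of $T_{\Aut}$ by the bisimulation equivalence $\approx$, so its states are exactly the regions, i.e.\ the $\approx$-equivalence classes of $V = \Loc \times \vSet$. By Definition~\ref{def:approx}, a region $\Class{v}_{\approx}$ is determined by a location $\ell \in \Loc$ together with, for each clock $c_i$, either the information $\nu_i > x_i$ or the integral part $\lfloor \nu_i \rfloor \in \{0,\ldots,x_i\}$ and whether $\overline{\nu_i} = 0$, plus the relative ordering of the fractional parts of the clocks that are below their maximal constant. Since $\Loc$ is finite, each $x_i$ is a fixed natural number, and there are only finitely many orderings of finitely many clocks, there are only finitely many regions; the classical bound recalled just above gives a number of regions (and edges) in $O((|V|+|\Trans|)\cdot 2^{|\delta(\Aut)|})$, hence in $O(2^{|\Aut|})$.

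Then I would observe that the edge relation of $\tilde{\tilde{T}}_{\Aut}$ is, by the definition of the quotient, a subset of $\tilde{\tilde{V}} \times \Sigma \times \tilde{\tilde{V}}$, with both $\Sigma$ and $\tilde{\tilde{V}}$ finite, so it is finite as well; moreover the arena condition (each state has an outgoing edge) is inherited from $T_{\Aut}$: every $v \in V$ has an outgoing edge in $T_{\Aut}$ by the definition of $(\mathcal{G}_{\Aut},v_0)$, and hence $\Class{v}_{\approx}$ has an outgoing edge in the quotient. Finally the target sets $\tilde{F}_i = \{\Class{v}_{\approx} \mid v \in F_i\}$ are subsets of the finite set $\tilde{\tilde{V}}$ and the gains are the associated reachability gain functions, so $(\tilde{\tilde{\mathcal{G}}}_{\Aut},\Class{v_0})$ is a finite reachability game.

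There is no real obstacle here: the statement is an immediate consequence of the finiteness of the region abstraction recalled above, which is standard~\cite{RajeevAlur.1994}; the only point deserving an explicit line is that passing to the quotient preserves the arena property, i.e.\ that no region becomes a dead end.
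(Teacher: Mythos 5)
Your proof is correct and follows essentially the same route as the paper, which simply invokes the classical bound $O((|V|+|\Trans|)\cdot 2^{|\delta(\Aut)|})$ on the number of regions and edges of the region graph from~\cite{RajeevAlur.1994} and concludes immediately. Your added remarks (counting regions explicitly, preservation of the no-dead-end arena condition under quotient, finiteness of the target sets) are a harmless and correct elaboration of the same argument.
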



Finally, in light of the construction of the reachability region game, the bisimulation equivalence $\approx$ respects the gain functions of the reachability timed game and of the reachability region game.

\begin{lemma}
	Given $(\mathcal{G}_{\Aut},v_0) =  (\timeAr, \allGain, (F_i)_{i\in \Pi})$ be a reachability timed game and $(\tilde{\tilde{\mathcal{G}}}_{\Aut}, \Class{v_0}) = (\tilde{\tilde{\timeAr}}, (\approxGain_i)_{i\in \Pi}, (\tilde{\tilde{F}}_i)_{i \in \Pi})$ its associated region game, $\approx$ respects the gain functions.	
\end{lemma}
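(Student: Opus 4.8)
The plan is to obtain this statement as a direct instance of the general Lemma above, namely that on a reachability game any bisimulation equivalence which respects the target sets also respects the gain functions (the quotient being the reachability quotient game of Definition~\ref{def:quotientReach}). So the only real work is to check that $\approx$, as introduced in Definition~\ref{def:approx}, fits that framework in the timed setting, and then to unfold what ``respects the gain functions'' concretely asserts about region games.

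First I would recall that, by~\cite{RajeevAlur.1994}, $\approx$ is a bisimulation equivalence on $T_{\Aut}$, hence on the reachability timed game $(\mathcal{G}_{\Aut},v_0)$, and that by Definition~\ref{def:approx} every pair $(\ell,\nu)\approx(\ell',\nu')$ satisfies $\ell=\ell'$. Since in a reachability timed game membership $(\ell,\nu)\in F_i$ depends only on the location (namely $\ell\in\Goal_i$), this immediately yields that $\approx$ respects the partition and respects the target sets in the sense of Definition~\ref{def:respTar}: two $\approx$-equivalent states belong to the same $V_j$ and agree on membership in each $F_i$. Consequently the reachability region game is exactly the reachability quotient game of Definition~\ref{def:quotientReach}, with $\tilde{\tilde{F}}_i=\{\Class{v}_{\approx}\mid v\in F_i\}$, and moreover $\Class{v}_{\approx}\in\tilde{\tilde{F}}_i$ if and only if $v\in F_i$; this well-definedness of $\tilde{\tilde{F}}_i$ as a union of $\approx$-classes is precisely what ``respects the target sets'' buys us.

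It then remains to verify the two conditions of ``$\approx$ respects the gain functions''. For condition (i), if $\rho\approx\rho'$ with $\rho,\rho'\in\Plays_{\timeAr}$, then $\rho_n\approx\rho'_n$ for every $n$, so $\rho_n\in F_i\Leftrightarrow\rho'_n\in F_i$; hence $\Occ(\rho)\cap F_i\neq\emptyset\Leftrightarrow\Occ(\rho')\cap F_i\neq\emptyset$, i.e. $\Gain_i(\rho)=\Gain_i(\rho')$ for all $i\in\Pi$. For condition (ii), I would invoke the Lemma asserting that $\approx_q\colon v\mapsto\Class{v}_{\approx}$ is a bisimulation between $T_{\Aut}$ and $\tilde{\tilde{T}}_{\Aut}$: if $\rho\approx_q\tilde{\tilde{\rho}}$ then $\tilde{\tilde{\rho}}_n=\Class{\rho_n}_{\approx}$ for every $n$, and by the well-definedness noted above $\tilde{\tilde{\rho}}_n\in\tilde{\tilde{F}}_i\Leftrightarrow\rho_n\in F_i$; therefore $\Occ(\rho)\cap F_i\neq\emptyset\Leftrightarrow\Occ(\tilde{\tilde{\rho}})\cap\tilde{\tilde{F}}_i\neq\emptyset$, that is $\Gain_i(\rho)=\approxGain_i(\tilde{\tilde{\rho}})$.

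There is essentially no genuine obstacle here: the argument is a routine unfolding of the definitions, and the single point needing any care is the well-definedness of $\tilde{\tilde{F}}_i$ as a set of $\approx$-classes, which rests on the location-preservation built into Definition~\ref{def:approx} together with the fact that reachability objectives in a timed game depend only on the sequence of visited locations. Alternatively, once these hypotheses on $\approx$ have been checked, the statement is immediate from the general reachability-game Lemma cited at the start of the plan.
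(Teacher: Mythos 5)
Your proposal is correct and follows exactly the route the paper intends: observe that $\approx$ preserves locations, hence respects the partition and the target sets (Definition~\ref{def:respTar}), and then invoke the general lemma that a bisimulation equivalence respecting the target sets on a reachability game respects the gain functions. The additional unfolding of conditions \emph{(i)} and \emph{(ii)} via $\Occ(\rho)\cap F_i$ is a sound elaboration of that general lemma, which the paper itself leaves implicit.
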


\begin{remark}
	Let $\Aut = (\Sigma,\Loc,\Trans,\Clock)$ be a timed automaton, $T_{\Aut} = (\Sigma, V, E)$ be its associated transition system and $\approx$ be the bissimulation equivalence on $T_{\Aut}$ as defined in Definition~\ref{def:approx}, we have that $((\ell,\nu),a,(\ell',\nu')) \in E$ if and only if there exist $g\in \Guard$, $Y \in 2^{\Clock}$ and $d \in \mathbb{R}^+$ such that $(\ell,a,g,Y,\ell') \in \Trans$, $\nu' = [Y\leftarrow 0](v+d)$ and $v+d \models g$. Thus, we abstract the notion of time elapsing in the edeges of the transition system.
	
	Then, since $\approx$ is a bisimulation equivalence on $T_{\Aut}$, for all $((\ell_1,\nu_1),a,(\ell'_1, \nu'_1)) \in E$ and for all $(\ell_2,\nu_2) \in V$ such that $(\ell_1, \nu_1) \approx (\ell_2,\nu_2)$, there exists $(\ell'_2, \nu'_2) \in V$ such that $((\ell_2,\nu_2),a,(\ell'_2,\nu'_2)) \in E$ and $(\ell'_1,\nu'_1) \approx (\ell'_2,\nu'_2)$. The time elapsing between $\nu_1$ and $\nu'_1$ is not necessarly  the same as between $\nu_2$ and $\nu'_2$. Thus, $\approx$ is a timed abstract bisimulation in the classical way~\cite{RajeevAlur.1994}.
\end{remark}
\subsection{Complexity results}

\begin{theorem}	
Given a reachability PPTA $(\Aut, \ell_0)$ and $x,y \in \{0,1\}^{|\Pi|}$, the constrained existence problem in reachability timed games is EXPTIME-complete.
\end{theorem}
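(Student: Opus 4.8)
The plan for the upper bound is to lift the finite-game algorithm of Theorem~\ref{thm:exptime} through the region abstraction. Starting from the reachability PPTA $(\Aut,\ell_0)$, consider its reachability timed game $(\mathcal{G}_{\Aut},v_0)$ and its reachability region game $(\tilde{\tilde{\mathcal{G}}}_{\Aut},\Class{v_0})$. By the lemmas recalled above, $\approx$ is a bisimulation equivalence on $(\mathcal{G}_{\Aut},v_0)$ that respects the partition, the target sets and hence the gain functions, so $(\tilde{\tilde{\mathcal{G}}}_{\Aut},\Class{v_0})$ is the reachability quotient game of $(\mathcal{G}_{\Aut},v_0)$ in the sense of Definition~\ref{def:quotientReach}. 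Applying Theorem~\ref{thm:eqSPE} to every gain profile $p$ with $x\le p\le y$, there is an SPE $\sigma$ in $(\mathcal{G}_{\Aut},v_0)$ with $x\le\Gain(\outcome{\sigma}{v_0})\le y$ if and only if there is an SPE $\tau$ in $(\tilde{\tilde{\mathcal{G}}}_{\Aut},\Class{v_0})$ with $x\le\approxGain(\outcome{\tau}{\Class{v_0}})\le y$; so it suffices to solve the constrained existence problem in the region game. By Lemma~\ref{lemma:finiteReach} this game is finite, it has $|\Pi|$ players, and its transition system has size $O(2^{|\Aut|})$. Hence Theorem~\ref{thm:exptime} solves it in time at most exponential in $|\Pi|$ and polynomial in $2^{|\Aut|}$, that is in $2^{O(|\Pi|+|\Aut|)}$; adding the $O(2^{|\Aut|})$ cost of building the region game, the whole procedure runs in EXPTIME. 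This is where the exponential cost of the SPE analysis — which in Theorem~\ref{thm:exptime} enters only through the number of players — and the $2^{|\Aut|}$ cost of the region construction combine into a single exponential in the size of $(\Aut,\ell_0)$, and it is why Theorem~\ref{thm:exptime} is used rather than the PSPACE algorithm of \cite{BrihayeBGR18}, whose application to region games would only yield EXPSPACE.

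For the lower bound I would reduce from the acceptance problem for polynomial-space alternating Turing machines, which is EXPTIME-complete. Given such a machine $M$ and input $\iota$, add a step counter so that $M$ halts on every branch, each branch ending in a location $q_{\mathrm{acc}}$ or $q_{\mathrm{rej}}$, and build, via the classical encoding of space-bounded machines by timed automata \cite{RajeevAlur.1994,AM99,JT07}, a reachability PPTA $(\Aut,\ell_0)$ of size polynomial in $|M|+|\iota|$: locations carry the finite control and head position, the $p(|\iota|)$ clocks encode the tape, existential locations belong to player~$1$ and universal ones to player~$2$, and $\Goal_1=\{q_{\mathrm{acc}}\}$, $\Goal_2=\{q_{\mathrm{rej}}\}$ are disjoint absorbing target locations. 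Since every branch halts, every play of $(\mathcal{G}_{\Aut},v_0)$ visits exactly one of these two targets, so this game is zero-sum, and, by a standard induction on the (finite) computation tree of $M$, player~$1$ has a winning strategy in it if and only if $M$ accepts $\iota$. A finite zero-sum turn-based reachability game admits an SPE whose outcome realizes its value, and, since any SPE is an NE and every NE of a zero-sum game realizes the value, every SPE realizes it as well (see e.g.\ \cite{BrihayeBGR18}); combining this with Theorem~\ref{thm:eqSPE} and Lemma~\ref{lemma:finiteReach}, $(\mathcal{G}_{\Aut},v_0)$ has an SPE $\sigma$ with $\Gain(\outcome{\sigma}{v_0})=(1,0)$ if and only if $M$ accepts $\iota$. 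Taking $x=y=(1,0)$ then gives an instance of the constrained existence problem that is positive iff $M$ accepts, which establishes EXPTIME-hardness.

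Neither half needs a new idea, so I do not expect a real conceptual obstacle; the work lies in the bookkeeping. On the upper-bound side the care is to check that $\approx$ on the timed game meets every hypothesis of Theorem~\ref{thm:eqSPE} — which the lemmas of this section supply — and to keep the $2^{|\Aut|}$ blow-up of the region construction multiplicatively separate from the $2^{|\Pi|}$ factor. On the lower-bound side the only delicate point is that player~$2$ must be handed a genuine \emph{reachability} objective: the step-counter padding forces every computation branch to terminate, so the alternation of $M$ becomes a zero-sum reachability game whose value subgame perfection can read off. A direct reduction from the membership problem in two-player turn-based reachability timed games \cite{AM99,JT07} would also work, at the cost of an additional gadget replacing the avoider's safety objective by a reachability one.
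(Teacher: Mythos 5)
Your upper bound is exactly the paper's argument: reduce to the region game via Theorem~\ref{thm:eqSPE} (checking that $\approx$ respects the partition and the target sets, hence the gain functions), then invoke Theorem~\ref{thm:exptime}, and observe that the two exponentials live in different parameters ($2^{|\Aut|}$ for the regions, $2^{|\Pi|}$ for the SPE analysis) so they compose into a single exponential. Nothing to add there.

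Your lower bound takes a genuinely different route. The paper reduces from countdown games, following \cite[Proposition~6.12]{brenguier:tel-00827027}: this yields EXPTIME-hardness already with \emph{two clocks}, and the only modification needed is to turn Adam's safety objective into a reachability one (a fresh target $w_\forall$ reachable under the guard $x=0\wedge y>c_0$, which every Adam-winning play eventually enables because the game is strongly non-Zeno). You instead reduce from APSPACE acceptance, encoding the tape with polynomially many clocks and the alternation with the two players; this is a standard and legitimate alternative, though it proves a weaker statement along the clock axis. Both reductions face the same obstacle — the "avoider" must be given a genuine reachability objective — and both resolve it by forcing every play to hit exactly one of two disjoint targets, after which the argument "winning strategy for player~1 $\Leftrightarrow$ some SPE has gain $(1,0)$" goes through by determinacy plus the fact that every SPE is an NE (your spelled-out version of the paper's "one can be convinced"). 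The one point you should not leave implicit is that \emph{every} play of your timed game, including plays where a player deviates from faithful simulation (e.g.\ by a delay that scrambles the clock encoding of the tape), must still reach $q_{\mathrm{acc}}$ or $q_{\mathrm{rej}}$: the zero-sum/determinacy step collapses if some play wins for neither player, so the encoding must detect cheating and route the play to the cheater's opponent's target. This is achievable with standard gadgets and is the same kind of care the paper exercises with its $x=0\wedge y>c_0$ escape transitions, but it is where a careless version of your reduction would actually fail.
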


The EXPTIME-hardness is due to a reduction from \emph{countdown games} and is inspired by the one provided in~\cite[Section 6.3.3]{brenguier:tel-00827027}. Thus, we only prove the EXPTIME-easiness.

\begin{proof}[EXPTIME-easiness]
  Given a PPTA $(\Aut, \ell_0)$ with target sets $(\Goal_i)_{i\in
    \Pi}$ and given $x,y \in \{0,1\}^{|\Pi|}$.  Thanks to
  Theorem~\ref{thm:eqSPE}, it is equivalent to solve this problem in
  the reachability region game. Moreover, the size of the reachability
  region game is exponential, because its transition system
  $\tilde{\tilde{T}}_{\Aut}$ is exponential in the size of $\Aut$, but
  not in the number of players. Then, since the reachability region
  game is a finite reachability game (Lemma~\ref{lemma:finiteReach}),
  we can apply Theorem~\ref{thm:exptime}. It causes an exponential
  blow-up in the number of players but is polynomial in the size of
  transition system $\tilde{\tilde{T}}_{\Aut}$. Thus, this entire
  procedure runs in (simple) exponential time in the size of the PPTA
  $(\Aut, \ell_0)$ . 
  %
\end{proof}

Notice that, since there always exists an SPE in a finite reachability game~\cite{Ummels06}, there always exists an SPE in the region game and so in the reachability timed game (Theorem~\ref{thm:eqSPE}).


\subsection{Time-bounded reachability, Zenoness and other extensions}
\label{section:thomas}

In this paper, we focus on (qualitative) reachability timed games, and
ignore the effect of Zeno behaviors\footnote{ A run $\rho =(\ell_0,\nu_0) \xrightarrow{d_1,a_1} (\ell_1,\nu_1) \xrightarrow{d_2,a_2}\ldots$ in a timed automaton is said \emph{timed-divergent} if the sequence $(\sum_{j\leq i} d_j)_{i}$ diverges. A timed automaton is \emph{non-Zeno} if any finite run can be extended into a time-divergent run~\cite{DBLP:reference/mc/BouyerFLMO018}.}. Nevertheless we believe that our approach is rather robust and
can be extended to richer objectives and take into account Zeno
behaviors. In the following paragraphs, we try to briefly explain how
this could be achieve.

\paragraph{Time-bounded reachability.}
A natural extension of our framework would be to equip the objective
of each player with a time-bound. Player~$i$ aims at visiting $F_i$
within $TB_i$ time-units. We believe that this time-bound variant of
our constrained problem is decidable. Indeed, for each player, his
time-bound reachability objective can easily be encoded via a
deterministic timed automaton (on finite timed words) $\mathcal{A}_i$.
Given a timed game $\mathcal{G}_b$ equipped with a timed-bounded
objective for each player (described via $\mathcal{A}_i$), we could,
via standart product construction build a new reachability timed game
(without time-bound) $\mathcal{G}$. Solving the constrained existence
problem (with time-bound) in $\mathcal{G}_b$ is equivalent to solving
the constrained existence problem (of Definition~\ref{def:constrained}) in $\mathcal{G}$ (the
constrained being encoded in the $\mathcal{A}_i$'s). This approach
could extend to any property that can be expressed via a deterministic
timed automaton.

\paragraph{Towards $\omega$-regular objectives.}
Let us briefly explain how our approach could be adapted to prove the
decidability of the constrained existence problem for timed games with
$\omega$-regular objectives. For the sake of clarity, we here focus on
parity objective. First, let us notice that the results of
Section~\ref{section:SPE_game_quotient} (including
Theorem~\ref{thm:eqSPE}) apply to a general class of games, including
infinite games with classical $\omega$-regular objectives such as
parity. An algorithm to decide the constrained existence problem
(Definition~\ref{def:constrained}) on parity on finite games can be found in~\cite{Ummels06} via translation into tree automata. Equipped with these two
tools, we believe that we could adapt the definitions and results of
Section~\ref{section:timedGames} to obtain the decidability of the
constrained existence problem for parity timed games. Notice that, in order to obtain our complexity results for finite reachability games, we use other simpler tools than tree automata.

\paragraph{About Zenoness.}
In the present paper, we allow a player to win (or to prevent other
players to win) even if his strategy is responsible of Zeno
behaviors. In~\cite{AlfaroFHMS03}, the authors
propose an elegant approach to \emph{blame} a player that would
prevent divergence of time. The main idea is to transform the
$\omega$-regular objective of each player into another one which will
make him lose if he blocks the time. We believe that this idea could
be exploited in our framework in order to prevent from winning a
``blocking time player''.

%
%
%
 \bibliographystyle{splncs04}
\bibliography{biblio}
\appendix


\section{Proofs of Section~\ref{section:SPE_game_quotient}}

For the sake of clarity, we denote $\Plays_{\tAr}$, $\Hist_{\tAr}$ and $\Hist_{i \tAr}$ by $\widetilde{\Plays}$, $\widetilde{\Hist}$ and $\widetilde{\Hist}_i$ respectively.

\subsection{Proof of Proposition~\ref{prop:uniformSPE}}

In this section, when we consider a history $h = h_0\ldots h_n$ for some $n \in \mathbb{N}$, the length of $h$, denoted by $|h|$, is its number of vertices.\\

This section is devoted to prove Proposition~\ref{prop:uniformSPE}. Let $(\mathcal{G},v_0) = (\Ar, (\Gain_i)_{i\in \pi})$ be a game and $\sim$ be a bisimulation equivalence on $(\mathcal{G},v_0)$ which respects the partition and such that for each $\rho$ and $\rho'$ in $\Plays$, if $\rho \sim \rho'$ then $\Gain(\rho) = \Gain(\rho')$.\\

If there exists an SPE $\tau$ in $(\mathcal{G},v_0)$ which is uniform and such that $\Gain(\outcome{\tau}{v_0})= p$, clearly there exists an SPE $\sigma$ in $(\mathcal{G},v_0)$ such that $\Gain(\outcome{\sigma}{v_0}) = p$. 

The difficult part is the other implication: if there exists an SPE $\sigma$ in $(\mathcal{G},v_0)$ such that $\Gain(\outcome{\sigma}{v_0}) = p$, then there exists an SPE $\tau$ in $(\mathcal{G},v_0)$ which is uniform and such that $\Gain(\outcome{\tau}{v_0})= p.$ Let us prove it. 

Let $\sigma$ be an SPE in $(\mathcal{G},v_0)$ such that $\Gain(\outcome{\sigma}{v_0}) = p$. In order to build $\tau$, we need some additional material and notations that we explain below.
 
\begin{itemize}
	\item for each $h \in \Hist(v_0)$: $[h] = \{ h' \in \Hist(v_0) \mid h \sim h' \}$;
	\item $\mathcal{C}^n= \{ [h] \mid h \in \Hist(v_0) \wedge |h| = n \}$;
	\item $\mathcal{R}: \displaystyle\bigcup_{n \in \mathbb{N}} \mathcal{C}^n \rightarrow \Hist(v_0) \cup \{ \perp \}$ which  allow us to indentify a witness for each class;
	\item $P: \Hist(v_0) \rightarrow \{0,1\}$
\end{itemize}

\subsubsection{Inductive construction of $\mathcal{R}$ and $P$}

The first step, is to choose in a proper way a witness to each each class $[h]$. We proceed by induction on the length of histories. Moreover, we claim that the following properties are satisfied all along the inductive construction.

\fbox{\begin{minipage}{\linewidth-35pt}
\textbf{Invariant 1}: For each $hv \in \Hist(v_0)$ such that $\mathcal{R}([hv]) \neq \perp$, $$hv \sim \mathcal{R}([hv]).$$
\noindent\textbf{Invariant 2:} For each $hv \in \Hist(v_0)$ such that $\mathcal{R}([hv]) \neq \perp$ and $|hv|>1$, $$\mathcal{R}([hv])= \mathcal{R}([h])\Last(\mathcal{R}([hv])).\footnotemark$$
\noindent\textbf{Invariant 3:} 
For each $hv \in \Hist(v_0)$ such that $\mathcal{R}([hv]) \neq \perp$, $$ h'v'\leq \mathcal{R}([hv]) < h'\outcome{\sigma_{\restriction h'}}{v'}$$ for some $h'v'$ such that $P(h'v') = 1$.
\end{minipage}}
\footnotetext{If $h = h_0\ldots h_n$ for some $n \in \mathbb{N}$, $\Last(h)=h_n$.}

Before beginning the induction, we initialize $P$ and $\mathcal{R}$ in the following way: for all $C \in \bigcup_{n \in \mathbb{N}}\mathcal{C}^n$, $\mathcal{R}(C) = \perp$ and for all $h \in \Hist(v_0)$, $P(h)=0$.

\begin{itemize}
	\item For $n =1$ : $\mathcal{C}^1 = \{ [v_0] \}$, we define $P(v_0)=1$. Then , for each $h$ such that $v_0 \leq h < \outcome{\sigma}{v_0}$, we define $\mathcal{R}([h]) =h.$ Thus, Invariant 3 is satisfied with $h'v' = v_0$ and for each $v_0 < hv < \outcome{\sigma}{v_0}$, $\mathcal{R}([hv])$ is defined in this step and satisfies Invariant 2. Since $\mathcal{R}([h]) = h$ for each witness defined in this step, Invariant 1 is satisfied too.
	\item Let us assume that these two invariant are satisfied after step $k$, and let us prove it remains true after step $k+1$.
	\item In this step, we first define $\mathcal{R}$ for each $C \in \mathcal{C}^{k+1}$ such that $\mathcal{R}(C) = \perp$. We know that for all $h_1v_1, h_2v_2 \in C$, $h_1 \sim h_2$ and $\mathcal{R}([h_1]) = \mathcal{R}([h_2])$ are already defined (\emph{i.e.,} $\neq \perp$). Moreover, by Invariant 1, $h_1 \sim \mathcal{R}([h_1])$, let $h = \mathcal{R}([h_1])$, by bisimulation $\sim$, there exists $v \in V$ such that $h_1v_1 \sim hv$. We define $P(hv) = 1$ and $\mathcal{R}(C) = hv$. Then, for all $h_2v_2 \in C$, $h_2v_2 \sim h_1v_1 \sim hv$ this implies that $h_2v_2 \sim \mathcal{R}([h_2v_2])$ (Inv 1 ok). For all $h_2v_2 \in C$, $\mathcal{R}([h_2v_2]) = \mathcal{R}(C) = hv = \mathcal{R}([h_1])v = \mathcal{R}([h_2])v =  \mathcal{R}([h_2]) \Last(\mathcal{R}([h_2v_2]))$ (Inv 2 ok). Moreover, for all $h_2v_2 \in C$, $\mathcal{R}([h_2v_2]) = hv$ and $P(hv)=1$, since $hv \leq hv < h \outcome{\sigma_{\restriction h}}{v}$ (Inv 3 ok).\\
	
	Now, we extend the construction of $\mathcal{R}$ and $P$ from $hv$ in the following way: \\
	\fbox{ $\forall h'v' \in \Hist(v_0)$ such that  $hv < h'v' < h \outcome{\sigma_{\restriction h}}{v}$ we define $\mathcal{R}([h'v'])=h'v'$. $\star$}
	
	Now, we have to prove that the invariants remains satisfied for all these new defined classes.
	\begin{itemize}
		\item $\forall \overline{h}\overline{v} \in [h'v']$: $\overline{h}\overline{v} \sim h'v' = \mathcal{R}([h'v']) = \mathcal{R}([\overline{h}\overline{v}])$ (Inv 1 ok);
		\item $\forall \overline{h}\overline{v} \in [h'v']$, we have that $\overline{h}\overline{v} \sim h'v'$ thus: $\mathcal{R}([\overline{h}\overline{v}]) = \mathcal{R}([h'v']) = \mathcal{R}([h']) \Last(\mathcal{R}([h'v']))$ (by construction $\star$). Thus, since $\overline{h} \sim h'$: $\mathcal{R}([h'])= \mathcal{R}([\overline{h}])$ (Inv 2 ok).
		
		\item $\forall \overline{h}\overline{v} \in [h'v']$, we have by construction $\star$ that $P(hv) = 1 $ and $hv < h'v' < h \outcome{\sigma_{\restriction h}}{v}$. Since $h'v' = \mathcal{R}([h'v'])$ and $\mathcal{R}([h'v']) = \mathcal{R}([\overline{h}\overline{v}])$ ($h'v' \sim \overline{h}\overline{v}$), we are done (Inv 3 ok).
	\end{itemize}	
\end{itemize}

\subsubsection{Construction of $\tau$}

To build the uniform strategy profile $\tau$, we proceed as follows: for all $n \in \mathbb{N}$, for all $C \in \mathcal{C}^n$, for all $h \in C$, by assuming that $\Last(h) \in V_i$:

\begin{itemize}
	\item If $\mathcal{R}([h]) = h$ ($h$ is a witness, thus we want to follow $\sigma$): $\tau_i(h) = \sigma_i(h)$;
	\item If $\mathcal{R}([h]) \neq h$ (we simulate $\sigma$): we know by Invariant 1 that $h \sim \mathcal{R}([h])$, thus in particular $\Last(h) \sim \Last(\mathcal{R}([h]))$, by bisimulation $\sim$, there exists $x\in V$ such that $\Last(h)x \sim \Last(\mathcal{R}([h]))\sigma_i(\mathcal{R}([h]))$. Thus, we define $\tau_i(h) = x$.
\end{itemize}

We state now, some properties about $\tau$ and $\sigma$. First, we define $\Wit = \{ h \in \Hist(v_0) \mid \exists C \in \bigcup_{n\in\mathbb{N}}\mathcal{C}^n \text{ st. } \mathcal{R}(C)= h \}.$

\begin{lemma}
	\label{lemma:egualiteTauSigma}
	For all $h \in \Hist(v_0)$ such that $h \in \Wit$ and $\Last(h) \in V_i$: $\tau_i(h) = \sigma_i(h)$.
\end{lemma}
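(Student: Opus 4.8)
The plan is to reduce the statement to the single fact that a witness is always the $\mathcal{R}$-representative of its own $\sim$-class, namely $h \in \Wit \Rightarrow \mathcal{R}([h]) = h$; once this is known, the conclusion follows immediately from the definition of $\tau$, since in that definition the value $\tau_i(h)$ is set to $\sigma_i(h)$ precisely in the case $\mathcal{R}([h]) = h$ (and $\tau_i(h)$ is obtained by ``simulating'' $\sigma$ only in the complementary case $\mathcal{R}([h]) \neq h$).

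Concretely, I would argue as follows. Let $h \in \Wit$ with $\Last(h) \in V_i$ (this last hypothesis is exactly what is needed for $\tau_i(h)$ and $\sigma_i(h)$ to be defined). By definition of $\Wit$ there is a class $C \in \bigcup_{n \in \mathbb{N}} \mathcal{C}^n$ with $\mathcal{R}(C) = h$. Pick any $h' \in C$; since $C$ is a $\sim$-equivalence class of histories, $C = [h']$, and hence $\mathcal{R}([h']) = \mathcal{R}(C) = h \neq \perp$. Invariant~1 -- which is maintained throughout the inductive construction of $\mathcal{R}$ and $P$, hence holds at its end for every history whose class carries a non-$\perp$ value -- then gives $h' \sim \mathcal{R}([h']) = h$. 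Therefore $h \in [h'] = C$, so $[h] = C$ and $\mathcal{R}([h]) = \mathcal{R}(C) = h$. Applying the definition of $\tau$ to $h$ (viewed as an element of the class $C = [h] \in \mathcal{C}^{|h|}$) in the case $\mathcal{R}([h]) = h$ yields $\tau_i(h) = \sigma_i(h)$, as desired.

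There is essentially no obstacle here: the lemma is a straightforward bookkeeping consequence of Invariant~1 and of the case distinction in the definition of $\tau$. The only slightly delicate point is the observation that an element $C$ of some $\mathcal{C}^n$ coincides with $[h']$ for each of its members $h'$, which, combined with the availability of Invariant~1 at the end of the construction, is what lets us identify the class $C$ witnessing $h \in \Wit$ with $[h]$ itself.
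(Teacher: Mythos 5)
Your proof is correct and takes the same route as the paper, which simply asserts the lemma ``by construction of $\tau$''; you merely spell out the one bookkeeping step that assertion hides, namely that $h\in\Wit$ (i.e.\ $h=\mathcal{R}(C)$ for \emph{some} class $C$) forces $\mathcal{R}([h])=h$ via Invariant~1, so that the first case of the definition of $\tau$ applies. Nothing further is needed.
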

\begin{proof}
	This assertion is true due to the construction of $\tau$.
\end{proof}

\begin{lemma}
	\label{lemma:tauUniform}
	For all $h,h' \in \Hist(v_0)$ such that $h \sim h'$: $\tau_i(h) \sim \tau_i(h')$ by assuming that $\Last(h) \in V_i$.
\end{lemma}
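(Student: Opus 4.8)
The plan is to reduce everything to a single claim: for every $h \in \Hist_i(v_0)$ one has $\tau_i(h) \sim \sigma_i(\mathcal{R}([h]))$. Granting this, the lemma is immediate: if $h \sim h'$ then $[h] = [h']$ (since $\sim$ is an equivalence relation), hence $\mathcal{R}([h]) = \mathcal{R}([h'])$, and so $\tau_i(h) \sim \sigma_i(\mathcal{R}([h])) = \sigma_i(\mathcal{R}([h'])) \sim \tau_i(h')$; symmetry and transitivity of $\sim$ then give $\tau_i(h) \sim \tau_i(h')$. Note also that $\sim$ respecting the partition guarantees $\Last(h') \in V_i$, so $\tau_i(h')$ is indeed defined.

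To prove the claim, first I would check that the objects involved make sense. Write $w = \mathcal{R}([h])$. The inductive construction assigns a non-$\perp$ value to every equivalence class (each class of histories of length $n$ receives a witness during step $n$, or earlier along an outcome extension), so $w \neq \perp$. By Invariant~1, $w \sim h$, hence $\Last(w) \sim \Last(h) \in V_i$, and since $\sim$ respects the partition, $\Last(w) \in V_i$; thus $w \in \Hist_i(v_0)$ and $\sigma_i(w)$ is well-defined. Now split on the definition of $\tau_i(h)$. If $\mathcal{R}([h]) = h$, i.e.\ $h = w$, then $\tau_i(h) = \sigma_i(h) = \sigma_i(w)$, and $\tau_i(h) \sim \sigma_i(w)$ by reflexivity of $\sim$. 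Otherwise $\mathcal{R}([h]) \neq h$, and by construction $\tau_i(h) = x$ with $\Last(h)\,x \sim \Last(w)\,\sigma_i(w)$; since $\sim$ on two-vertex fragments means componentwise $\sim$, this yields in particular $x \sim \sigma_i(w)$, i.e.\ $\tau_i(h) \sim \sigma_i(w)$. In both cases the claim holds.

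The only delicate points — and hence the ``main obstacle'' — are the well-definedness checks bundled into the previous paragraph: that $\mathcal{R}$ never returns $\perp$ after the construction, that $\mathcal{R}([h])$ is a history ending in a vertex of Player~$i$ so that $\sigma_i$ applies to it (this is exactly where Invariant~1 together with the partition-respecting hypothesis on $\sim$ is used), and that the relation $\Last(h)\,x \sim \Last(w)\,\sigma_i(w)$ decomposes coordinatewise according to the conventions fixed earlier in the paper. Once these are granted, the lemma is a two-line case analysis requiring no computation.
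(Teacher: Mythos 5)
Your proof is correct and follows essentially the same route as the paper's: the paper also observes that $\mathcal{R}([h]) = \mathcal{R}([h'])$ and that by construction $\tau_i(h) \sim \sigma_i(\mathcal{R}([h]))$ and $\tau_i(h') \sim \sigma_i(\mathcal{R}([h']))$, then concludes by transitivity. You merely unfold the phrase ``by construction'' into the two cases of the definition of $\tau$ and add the (sound) well-definedness checks, which the paper leaves implicit.
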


Notice that,  since $\sim$ respects the partition, if $\Last(h) \in V_i$ then $\Last(h')\in V_i$, and vice versa.

\begin{proof}
	Let $h,h' \in \Hist(v_0)$ such that $h \sim h'$ and $\Last(h) \in V_i$ for some $i \in \Pi$ then $\Last(h') \in V_i$. We have that $\mathcal{R}([h]) = \mathcal{R}([h'])$. By construction, $\tau_i(h) \sim \sigma_i(\mathcal{R}([h]))$ and $\tau_i(h') \sim \sigma_i(\mathcal{R}([h']))$, by transitivity, we have: $\tau_i(h) \sim \tau_i(h')$. \qed
\end{proof}

\begin{lemma}
	\label{lemma:witness}
	For all $h \in \Wit$, $h\tau_i(h) \in \Wit$ (by assuming that $\Last(h) \in V_i$ for some $i \in \Pi$).
\end{lemma}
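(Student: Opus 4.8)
The statement to prove is: for all $h \in \Wit$ with $\Last(h) \in V_i$, we have $h\tau_i(h) \in \Wit$. Recall that $\Wit$ is exactly the image of $\mathcal{R}$, i.e., the set of chosen witnesses of equivalence classes.

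\textbf{Proof plan.} The goal is to show that the set $\Wit$ of chosen witnesses is closed under the moves of $\tau$: if $h \in \Wit$ and $\Last(h) \in V_i$, then $h\tau_i(h) \in \Wit$. The first step I would take is to simplify $\tau_i(h)$: since $h \in \Wit$, Lemma~\ref{lemma:egualiteTauSigma} gives $\tau_i(h) = \sigma_i(h)$, so it suffices to prove $h\sigma_i(h) \in \Wit$. The second ingredient is a structural description of $\Wit$ extracted from the inductive construction of $\mathcal{R}$ and $P$: a history $g$ is a witness (i.e. $\mathcal{R}([g]) = g$) exactly when it lies on an ``$\sigma$-segment'' issued from a branching point, that is, there is $h'v'$ with $P(h'v') = 1$ and $h'v' \leq g < h'\outcome{\sigma_{\restriction h'}}{v'}$. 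One direction of this is precisely Invariant~3; the other direction is the content of the $n=1$ step and of the $\star$ extension in the construction, which turn every strict prefix of $h'\outcome{\sigma_{\restriction h'}}{v'}$ extending $h'v'$ into a witness whenever $P(h'v') = 1$.

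Granted this description, the proof proceeds as follows. Since $h \in \Wit$, Invariant~3 yields $h'v'$ with $P(h'v') = 1$ and $h'v' \leq h < h'\outcome{\sigma_{\restriction h'}}{v'}$. Thus, reading off the part of $h$ after the prefix $h'$, the history $h$ is consistent with the strategy profile $\sigma_{\restriction h'}$, because the suffix of $h'\outcome{\sigma_{\restriction h'}}{v'}$ after $h'$ is the play $\outcome{\sigma_{\restriction h'}}{v'}$, which is consistent with $\sigma_{\restriction h'}$. As $\Last(h) \in V_i$, the vertex $\sigma_i(h)$ is therefore exactly the vertex visited right after $h$ along $h'\outcome{\sigma_{\restriction h'}}{v'}$. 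Hence $h\tau_i(h) = h\sigma_i(h)$ is again a prefix of $h'\outcome{\sigma_{\restriction h'}}{v'}$ extending $h'v'$, and it is a \emph{strict} prefix since $\outcome{\sigma_{\restriction h'}}{v'}$ is infinite:
$$ h'v' \leq h\tau_i(h) < h'\outcome{\sigma_{\restriction h'}}{v'}. $$
Because $P(h'v') = 1$, the structural description of $\Wit$ applies and gives $h\tau_i(h) \in \Wit$, which is what we wanted.

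I do not expect a serious obstacle here: the only slightly delicate point is to make the structural characterization of $\Wit$ precise, i.e. to justify that every history $g$ with $h'v' \le g < h'\outcome{\sigma_{\restriction h'}}{v'}$ and $P(h'v')=1$ has $\mathcal{R}([g]) = g$. This requires reading the two places in the inductive construction where $P$ is set to $1$ (the case $h'v' = v_0$ in step $n = 1$, and the case $h'v' = \mathcal{R}(C)$ for a fresh class $C$ in step $|h'v'|$, followed immediately by the $\star$ extension along $h'\outcome{\sigma_{\restriction h'}}{v'}$) and observing that in both cases all such $g$ are assigned themselves as witnesses. Once this is established, the argument above is essentially a one-line consequence of the fact that following $\sigma_i$ from a prefix of an $\sigma$-outcome keeps you on that same outcome.
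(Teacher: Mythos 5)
Your proof is correct and follows essentially the same route as the paper: reduce $\tau_i(h)$ to $\sigma_i(h)$ via Lemma~\ref{lemma:egualiteTauSigma}, use Invariant~3 to sandwich $h$ between $h'v'$ and $h'\outcome{\sigma_{\restriction h'}}{v'}$, observe that appending $\sigma_i(h)$ keeps you strictly inside that outcome, and conclude by the construction of $\mathcal{R}$ (the $\star$ step). If anything, you are slightly more explicit than the paper in spelling out the converse of Invariant~3, i.e. that every history so sandwiched is assigned itself as witness.
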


\begin{proof}	
	Let $h \in \Wit$, such that $\Last(h) \in V_i$ for some $i \in \Pi$. Since $h \in \Wit$, by Invariant 3, there exists $h'v' \in \Hist(v_0)$ such that:
	$$ h'v' \leq h < h' \outcome{\sigma_{\restriction h'}}{v'}.$$ Thus, we have that $$ h'v' \leq h\sigma_i(h) < h' \outcome{\sigma_{\restriction h'}}{v'}.$$
	It follows by construction of $\mathcal{R}$, that $h\sigma_i(h) \in \Wit$. Moreover, $h \in \Wit$ implies that $\tau_i(h) = \sigma_i(h)$ (by Lemma~\ref{lemma:egualiteTauSigma}). Thus, $h\tau_i(h) \in \Wit$. \qed
\end{proof}

\begin{lemma}
	\label{lemma:outcomeWit}
	For all $hv \in \Wit$, $\outcome{\sigma_{\restriction h}}{v} = \outcome{\tau_{\restriction h}}{v}$.
\end{lemma}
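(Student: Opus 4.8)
The plan is to follow the play $\rho := \outcome{\sigma_{\restriction h}}{v} = \rho_0\rho_1\cdots$ (with $\rho_0 = v$) vertex by vertex and to verify that at each stage $\tau_{\restriction h}$ prescribes exactly the move that $\sigma_{\restriction h}$ does. Concretely, I would prove by induction on $n\in\mathbb{N}$ the single invariant: \emph{$h\rho_0\cdots\rho_n \in \Wit$.}

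For the base case $n=0$ the history $h\rho_0 = hv$ is in $\Wit$ by hypothesis. For the inductive step, assume $h\rho_0\cdots\rho_n \in \Wit$ and let $i\in\Pi$ be the player with $\rho_n\in V_i$, so that $\Last(h\rho_0\cdots\rho_n)\in V_i$. Since $\rho$ is consistent with $\sigma_{\restriction h}$ we have $\rho_{n+1} = \sigma_{i\restriction h}(\rho_0\cdots\rho_n) = \sigma_i(h\rho_0\cdots\rho_n)$, and since $h\rho_0\cdots\rho_n \in \Wit$, Lemma~\ref{lemma:egualiteTauSigma} gives $\tau_i(h\rho_0\cdots\rho_n) = \sigma_i(h\rho_0\cdots\rho_n) = \rho_{n+1}$. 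Then Lemma~\ref{lemma:witness}, applied to the witness $h\rho_0\cdots\rho_n$ (whose last vertex lies in $V_i$), yields $h\rho_0\cdots\rho_n\,\tau_i(h\rho_0\cdots\rho_n) = h\rho_0\cdots\rho_{n+1}\in\Wit$, which closes the induction.

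From the computation carried out in the inductive step we moreover obtain, for every $n$, that $\rho_{n+1} = \tau_{i\restriction h}(\rho_0\cdots\rho_n)$ whenever $\rho_n\in V_i$; hence $\rho$ is a play from $v$ consistent with $\tau_{\restriction h}$, and by uniqueness of the outcome $\rho = \outcome{\tau_{\restriction h}}{v}$, which is exactly the statement.

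I do not anticipate a genuine obstacle: the whole argument is an unrolling of the definitions, and the real content is already isolated in the two auxiliary lemmas — that $\tau$ and $\sigma$ agree on histories of $\Wit$ (Lemma~\ref{lemma:egualiteTauSigma}) and that $\Wit$ is closed under appending the $\sigma$-prescribed successor (Lemma~\ref{lemma:witness}). The one point deserving a little care is the bookkeeping for restrictions, i.e.\ the identities $\sigma_{i\restriction h}(\rho_0\cdots\rho_n) = \sigma_i(h\rho_0\cdots\rho_n)$ and likewise for $\tau$, which is precisely what lets us invoke Lemma~\ref{lemma:egualiteTauSigma} at the history $h\rho_0\cdots\rho_n \in \Hist(v_0)$ rather than at $\rho_0\cdots\rho_n\in\Hist(v)$.
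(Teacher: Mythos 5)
Your proof is correct and takes essentially the same route as the paper: both arguments induct on the prefix $h\rho_0\cdots\rho_n$ being a witness and invoke Lemma~\ref{lemma:egualiteTauSigma} and Lemma~\ref{lemma:witness} at each step. The only cosmetic difference is that the paper runs a simultaneous induction on the two outcomes agreeing pointwise, whereas you carry a single invariant and conclude by uniqueness of the play consistent with $\tau_{\restriction h}$.
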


\begin{proof}
	Let $hv \in \Wit$, let $\rho = \outcome{\sigma_{\restriction h}}{v}$ and let $\overline{\rho} = \outcome{\tau_{\restriction h}}{v}$.  Let us prove by induction that for all $n \in \mathbb{N}$: \begin{enumerate} \item $\rho_n = \overline{\rho}_n$; \item $h\rho_0\ldots \rho_n \in \Wit$.\end{enumerate}
	
	For $n = 0$, $\rho_0 = v$ and $\overline{\rho}_0=v$. And by hypothesis, $hv \in \Wit$. Let us assume that both assertions are satisfied for all $n $ such that $n \leq k$. Let us prove that it remains true for $n = k+1$. By assuming that $\overline{\rho}_k \in V_i$,
\begin{enumerate}
	\item\begin{align*}
		\overline{\rho}_{k+1} &= \tau_i(h\overline{\rho}_0\ldots \overline{\rho}_k) \\
							  &= \tau_i(h\rho_0\ldots\rho_k) & \text{ By IH, } \overline{\rho}_0\ldots \overline{\rho}_k = \rho_0 \ldots \rho_k\\
							  &= \sigma_i(h\rho_0\ldots\rho_k)& \text{ By IH, } h\rho_0\ldots \rho_k \in \Wit \text{ and by Lemma~\ref{lemma:egualiteTauSigma}}\\
							  &= \rho_{k+1}.
	\end{align*}
	\item By IH, $h\rho_0\ldots\rho_k \in \Wit$, moreover we have that, by Lemma~\ref{lemma:egualiteTauSigma}:
	
		 $$h\rho_0\ldots\rho_k\rho_{k+1} = h\rho_0\ldots\rho_k \sigma_i(h\rho_0\ldots\rho_k)
		 							   = h\rho_0\ldots\rho_k \tau_i(h\rho_0\ldots\rho_k) $$
									   
	And by Lemma~\ref{lemma:witness}, we can conclude that $h\rho_0\ldots\rho_k \tau_i(h\rho_0\ldots\rho_k) \in \Wit$. \qed
\end{enumerate}
\end{proof}

\subsubsection{Proof that $\tau$ is an uniform SPE with gain profile $p$}
There is still to prove that $\tau$ is an uniform SPE in $(\mathcal{G},v_0)$ such that $\Gain(\outcome{\tau}{v_0}) = p$. By Lemma~\ref{lemma:tauUniform}, $\tau$ is uniform, let us prove this is an SPE with the gain profile $p$.

\begin{proof}
	First, since $v_0 \in \Wit$ and by Lemma~\ref{lemma:outcomeWit}, we have that $\outcome{\sigma}{v_0} = \outcome{\tau}{v_0}$. Thus, in particular, $\Gain(\outcome{\tau}{v_0}) = \Gain(\outcome{\sigma}{v_0})=p$.\\
	
	By absurdum, let us assume that $\tau$ is not an SPE in $(\mathcal{G},v_0)$. It means that there exist $hv \in \Hist(v_0)$, $i \in \Pi$ and a strategy $\tau'_i$ of Player~$i$ in $(\mathcal{G}_{\restriction h},v)$ such that $\tau'_i$ is a profitable deviation of $\tau_{i\restriction h}$, \emph{i.e.,}
	
	\begin{equation} \Gain_i(h \outcome{\tau_{\restriction h}}{v}) < \Gain_i(h \outcome{\tau'_i, \tau_{-i\restriction h}}{v}).\label{eq:proof0}\end{equation}
	
	Let $h'v' = \mathcal{R}([hv]) = \mathcal{R}([h])\Last(\mathcal{R}([hv]))$ by Invariant 2.\\
	
	\noindent\underline{First step:} Let $\rho = \outcome{\tau_{\restriction h}}{v}$ and $\rho' = \outcome{\tau_{\restriction h'}}{v'}$, let us prove by induction that for all $n \in \mathbb{N}$: \begin{enumerate}
	\item $\rho_n \sim \rho'_n$;
	\item $h'\rho'_0 \ldots \rho'_n \in \Wit$.
	\end{enumerate}
	
	For $n=0$, we have that $\rho_0=v$ and $\rho'_0= v'$, thus $v \sim v'$ since $hv \sim h'v'$. Moreover, $h'v' \in \Wit$ by hypothesis. Let us assume that these two properties are satisfied for all $n$ such that $n \leq k$, let us prove they remain true for $n = k+1$. Let us assume that $\rho_k \in V_j$ for some $j\in \Pi$, since $\sim$ respects the partition and due to the fact that $\rho_k \sim \rho'_k$ by IH, we have that $\rho'_k \in V_j$.
	
	\begin{enumerate}
		\item \begin{align*}
		\rho_{k+1} &= \tau_j(h\rho_0 \ldots \rho_k)\\
				   &\sim \tau_j(h'\rho'_0\ldots \rho'_k) & \text{By IH, } h\rho_0\ldots\rho_k \sim h'\rho'_0\ldots \rho'_k \text{ and by Lemma~\ref{lemma:tauUniform}}\\
				   &= \rho'_{k+1}
		\end{align*}
		\item $h'\rho'_0\ldots\rho'_k\rho'_{k+1} = h'\rho'_0\ldots\rho'_k \tau_i(h'\rho'_0\ldots\rho'_k)$, $h'\rho'_0\ldots\rho'_k \in \Wit$ by IH, thus by Lemma~\ref{lemma:witness}: $h'\rho'_0\ldots\rho'_k \tau_i(h'\rho'_0\ldots\rho'_k) \in \Wit$.
	\end{enumerate}
	It allows us to state by (1) that $h\rho \sim h'\rho'$, thus by hypothesis on $\sim$, we have that \begin{equation}\Gain(h\outcome{\tau_{\restriction h}}{v}) = \Gain(h'\outcome{\tau_{\restriction h'}}{v'}).\label{eq:proof1}\end{equation}
	
	By (2) and Lemma~\ref{lemma:egualiteTauSigma}, we have that $\outcome{\tau_{\restriction h'}}{v'} = \outcome{\sigma_{\restriction h'}}{v'}$ and thus:
	
	\begin{equation} \Gain(h'\outcome{\tau_{\restriction h'}}{v'}) = \Gain(h'\outcome{\sigma_{\restriction h'}}{v'}).\label{eq:proof2}\end{equation}
		
\noindent\underline{Second step:}	Let $\rho = \outcome{\tau'_i, \tau_{-i\restriction h}}{v}$, we will build a strategy $\tilde{\tau}_i$ in $(\mathcal{G}_{\restriction h'}, v')$ such that $\rho \sim \outcome{\tilde{\tau}_i, \tau_{-i\restriction h'}}{v'}$. Let $\tilde{p} \in \Hist_i(v')$ and let us assume that $\tilde{p} = \tilde{p}_0 \ldots \tilde{p}_m$ for some $m \in \mathbb{N}$.
\begin{itemize}
	\item If $\tilde{p}\sim \rho_0\ldots \rho_m$, then $\rho_m \in V_i$ ($\sim$ respects the partition) and $\rho_{m+1} = \tau'_i(\rho_0\ldots \rho_m)$. Thus, by $\sim$ there exists $x \in V$ such that $\tilde{p}x \sim \rho_0\ldots \rho_m\rho_{m+1}$. We define $\tilde{\tau}_i(\tilde{p}) = x$. Thus, $\tilde{\tau}_i(\tilde{p}_0 \ldots \tilde{\rho}_m) \sim \tau'_i(\rho_0\ldots \rho_m)$.
	\item Otherwise, we define $\tilde{\tau}_i(\tilde{p}) = x$ for some $x \in \Succ(\tilde{p}_m)$.
\end{itemize} 

Let $\tilde{\rho} = \outcome{\tilde{\tau}_i, \tau_{-i \restriction h'}}{v'}$, let us prove that for all $n \in \mathbb{N}$, $\tilde{\rho}_n \sim \rho_n$.
For $n=0$, $\tilde{\rho}_0 = v'$ and $\rho_0 = v$, since $h'v'\sim hv$, $v'\sim v$. Let us assume that this property is true for all $n\leq k$ and let us prove it remains true for $n=k+1.$
\begin{itemize}
	\item If $\tilde{\rho}_k \in V_i$, then since $\tilde{\rho}_k \sim \rho_k$ by IH, $\rho_k \in V_i$ ($\sim$ respects the partition). It follows:
	\begin{align*}
		\tilde{\rho}_{k+1} &= \tilde{\tau}_i(\tilde{\rho}_0\ldots \tilde{\rho}_k)\\
						   &\sim \tau'_i(\rho_0\ldots \rho_k) & \text{ By IH, }\tilde{\rho}_0\ldots \tilde{\rho}_k \sim \rho_0 \ldots \rho_k\text{ and by construction of }\tilde{\tau}_i\\
						   &= \rho_{k+1}.
	\end{align*}
	\item If $\tilde{\rho}_k \in V_j$ ($j \neq i$), as previously $\rho_k \in V_j$. Thus:
	\begin{align*}
		\tilde{\rho}_{k+1} &= \tau_{j \restriction h'}(\tilde{\rho}_0\ldots \tilde{\rho}_k) = \tau_{j}(h'\tilde{\rho}_0\ldots \tilde{\rho}_k)\\
						   &\sim \tau_j(h\rho_0\ldots\rho_k) \quad \text{ By IH, } h'\tilde{\rho}_0\ldots \tilde{\rho}_k \sim h\rho_0 \ldots \rho_k \text{ and by Lemma~\ref{lemma:tauUniform}}\\
						   &= \rho_{k+1}.
		\end{align*}
\end{itemize}

From this we have that $h\rho \sim h'\tilde{\rho}$ and in particular:

\begin{equation}\Gain_i(h\outcome{\tau'_i, \tau_{-i\restriction h}}{v}) = \Gain_i(h'\outcome{\tilde{\tau}_i, \tau_{-i \restriction h'}}{v'}).\label{eq:proof3}\end{equation}
	
\noindent\underline{Third step:} From $\tilde{\tau}_i$, we build $\tilde{\sigma}_i$ in $(\mathcal{G}_{\restriction h'},v')$ which is a profitable devition of $\sigma_{i \restriction h'}$. Let $p \in \Hist_i(v')$

\begin{itemize}
	\item If $h'p \in \Wit$, we consider $\mathcal{R}([h'p\tilde{\tau}_i(p)]) = \mathcal{R}([h'p]) \Last(\mathcal{R}([h'p\tilde{\tau}_i(p)])$ by Invariant 2.
	Let $x = \Last(\mathcal{R}([h'p\tilde{\tau}_i(p)]) $. We definie $\tilde{\sigma}_i(p) = x$, in particular, we have that $\mathcal{R}([h'p\tilde{\tau}_i(p)]) = \mathcal{R}([h'p]) \tilde{\sigma}_i(p)$ and thus $\mathcal{R}([h'p]) \tilde{\sigma}_i(p) \in \Wit$.
	\item If $h'p \not\in \Wit$, we define $\tilde{\sigma}_i(p) = \tilde{\tau}_i(p)$.
\end{itemize}
	
	Let $\pi = \outcome{\tilde{\sigma}_i, \sigma_{-i \restriction h'}}{v'}$ and $\pi' = \outcome{\tilde{\sigma}_i, \tau_{-i \restriction h'}}{v'}$.
	Let us prove that for all $n \in \mathbb{N}$:
	
	\begin{enumerate} 
		\item $\pi_n = \pi'_n$;
		\item $h'\pi_0\ldots \pi_n \in \Wit$
	\end{enumerate}
	
	For $n = 0$, we have that $\pi_0 = v' = \pi'_0$. Moreover, $h'v' \in \Wit$ by hypothesis. Let us assume that these two properties are true for all $n \leq k$ and let us prove that they remain true for $n = k+1$.
	
	\begin{itemize}
		\item If $\pi_k\in V_i$, then by IH, $\pi_k =\pi'_k \in V_i$. 
		\begin{enumerate}
		\item \begin{align*}
			\pi_{k+1} &= \tilde{\sigma}_i(\pi_0\ldots \pi_k)\\
					  &= \tilde{\sigma}_i(\pi'_0\ldots \pi'_k) & \text{By IH, } \pi_0\ldots\pi_k = \pi'_0\ldots\pi'_k.\\
					  &=\pi'_{k+1}
			\end{align*}
		\item \begin{align*}h'\pi_0\ldots\pi_k\pi_{k+1} &=h'\pi_0\ldots\pi_k \tilde{\sigma}_i(\pi_0\ldots\pi_k)\\
								&= \mathcal{R}([h'\pi_0\ldots\pi_k]) \tilde{\sigma}_i(\pi_0\ldots\pi_k) & \text{By IH, } h'\pi_0\ldots \pi_k \in \Wit \\
								&\in \Wit & \text{ By construction of } \tilde{\sigma}_i.\end{align*} 
		\end{enumerate}
		
		\item If $\pi_k\in V_j$ ($j \neq i$), then by IH, $\pi_k =\pi'_k \in V_j$.
		\begin{enumerate}
			\item \begin{align*}
			\pi_{k+1} &= \sigma_j(h'\pi_0\ldots \pi_k)\\
		  &= \tau_j(h'\pi_0\ldots \pi_k) & \text{ By IH, } h'\pi_0\ldots \pi_k \in \Wit \text{ and by Lemma~\ref{lemma:egualiteTauSigma}}\\
		  &= \tau_j(h'\pi'_0\ldots \pi'_k) & \text{ By IH}\\
		  &= \pi'_{k+1}.
		  \end{align*}
		  	\item 
			\begin{align*}
				h'\pi_0\ldots\pi_k\pi_{k+1} &= h'\pi_0\ldots\pi_k\sigma_j(h'\pi_0\ldots\pi_k) \\
											&= h'\pi_0\ldots\pi_k \tau_j(h'\pi_0\ldots\pi_k) & \text{ By IH, }h'\pi_0\ldots\pi_k\in \Wit \text{ and by Lemma~\ref{lemma:egualiteTauSigma} }\\
											&\in \Wit &\text{ By Lemma~\ref{lemma:witness}}
			\end{align*}
		\end{enumerate}
		\end{itemize}
		
Thus, we can conclude that:

\begin{equation}\Gain_i(h'\outcome{\tilde{\sigma}_i, \sigma_{-i \restriction h'}}{v'}) = \Gain_i(h'\outcome{\tilde{\sigma}_i, \tau_{-i \restriction h'}}{v'}). \label{eq:proof4}\end{equation}
	
Now, we want to prove that $\pi' = \outcome{\tilde{\sigma}_i, \tau_{-i \restriction h'}}{v'} \sim \tilde{\rho} = \outcome{\tilde{\tau}_i, \tau_{-i \restriction h'}}{v'}$. Let us recall, that from the second step, we know that $\tilde{\rho} \sim \rho = \outcome{\tau'_i, \tau_{-i\restriction h}}{v}$.
Let us prove that for all $n \in \mathbb{N}$: $\pi'_n \sim \tilde{\rho}_n$.

For $n = 0$: $\pi'_0 = v' = \tilde{\rho}_0$. Let us assume that this property is true for all $n \leq k$ and let tus prove that it remains true for $n = k+1$.
\begin{itemize}
	\item \textbf{If $\pi'_k \in V_i$} then, by IH we have that $\pi'_k \sim \tilde{\rho}_k$ and so $\tilde{\rho}_k \in V_i$. 
	\begin{align*}
		\pi'_{k+1} &= \tilde{\sigma}_i(\pi'_0\ldots \pi'_{k})\\
				   &= \Last(\mathcal{R}([h'\pi'_0\ldots\pi'_k \tilde{\tau}_i(\pi'_0\ldots \pi'_k)])) & h'\pi'_0\ldots\pi'_k \in \Wit\\
				   &\sim \tilde{\tau}_i(\pi'_0\ldots \pi'_k)
		\end{align*}
		By IH, we know that $\pi'_0\ldots \pi'_k\sim \tilde{\rho}_0 \ldots \tilde{\rho}_k$ and by hypothesis, we have that $\tilde{\rho}_0 \ldots \tilde{\rho}_k \sim \rho_0\ldots\rho_k$. It follows from the construction of $\tilde{\tau}_i$ that $\tilde{\tau}_i(\pi'_0\ldots \pi'_k) \sim \tau'_i(\rho_0\ldots\rho_k)$ and $\tilde{\tau}_i(\tilde{\rho}_0 \ldots \tilde{\rho}_k) \sim \tau'_i(\rho_0\ldots\rho_k)$. Thus, by transitivity, $\pi'_{k+1} \sim \tilde{\tau}_i(\tilde{\rho}_0 \ldots \tilde{\rho}_k) = \tilde{\rho}_{k+1}$.
		\item \textbf{If $\pi'_k \in V_j$ ($j \neq i$)} then as previously $\tilde{\rho}_k \in V_j$.
		\begin{align*}
			\pi'_{k+1} &= \tau_j(h'\pi'_0\ldots \pi'_k)\\
					   &\sim \tau_j(h'\tilde{\rho}_0\ldots \tilde{\rho}_k) & \text{ By IH, } h'\pi'_0\ldots \pi'_k \sim h'\tilde{\rho}_0\ldots \tilde{\rho}_k \text{ and by Lemma~\ref{lemma:tauUniform}}\\
					   &= \tilde{\rho}_{k+1}.
		\end{align*}
\end{itemize}

Thus $h'\pi' \sim h'\tilde{\rho}$ and it follows that:

\begin{equation}\Gain_i(h'\outcome{\tilde{\sigma}_i, \tau_{-i \restriction h'}}{v'}) = \Gain_i(h'\outcome{\tilde{\tau}_i, \tau_{-i \restriction h'}}{v'}). \label{eq:proof5}\end{equation}
	
\underline{Fourth step: putting all together}

By~\eqref{eq:proof0},\eqref{eq:proof1},\eqref{eq:proof2},\eqref{eq:proof3},\eqref{eq:proof4} and \eqref{eq:proof5}, we can conclude that 

$$ \Gain_i(h'\outcome{\sigma_{\restriction h'}}{v'}) < \Gain_i(h'\outcome{\tilde{\sigma}_i, \sigma_{-i\restriction h'}}{v'}).$$

Thus, there exists a profitable deviation of $\sigma_{i\restriction h'}$ for Player $i$ in $(\mathcal{G}_{\restriction h'},v')$. This is impossible, since $\sigma$ is an SPE in $(\mathcal{G},v_0)$.\qed

\end{proof}


\subsection{Proof of Theorem~\ref{thm:eqSPE}}

In this section we prove Theorem~\ref{thm:eqSPE}. In order to do so, we prove the two implications of the equivalence in two different propositions. 

\begin{proposition}
	Let $(\mathcal{G},v_0) = (\Ar, (\Gain_i)_{i\in \pi})$ be a game and $\initQ = ( \tAr, (\tilde{\Gain}_i)_{i \in \Pi})$ its associated quotient game where $\sim$ is a bisimulation equivalence on $(\mathcal{G},v_0)$. If $\sim$ respects the partition and the gain functions, we have that: if there exists an SPE $\sigma$ in  $(\mathcal{G},v_0)$ such that $\Gain(\outcome{\sigma}{v_0}) = p $ for some $p \in \{0,1\}^{|\Pi|}$ then there exists an SPE $\tau$ in $\initQ$ such that $\tilde{\Gain}(\outcome{\tau}{\Class{v_0}}) = p$.	
\end{proposition}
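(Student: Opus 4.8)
The plan is to combine Proposition~\ref{prop:uniformSPE} with a projection of strategies onto the quotient. First I would apply Proposition~\ref{prop:uniformSPE} to replace $\sigma$ by a \emph{uniform} SPE in $(\mathcal{G},v_0)$ with the same gain profile $p$ (its hypotheses are exactly those assumed here). Since $\sim_q\colon V\to\tilde V$ is a bisimulation between $T$ and $\tilde T$, its inverse is a simulation of $\tilde T$ by $T$, so every $\tilde h\in\widetilde{\Hist}(\Class{v_0})$ can be \emph{lifted}: there is $h\in\Hist(v_0)$ with $\Class{h_n}_\sim=\tilde h_n$ for all $n$ (and automatically $h\in\Hist_i(v_0)$ when $\tilde h\in\widetilde{\Hist}_i(\Class{v_0})$, because $\sim$ respects the partition). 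Fix once and for all a choice $\mathrm{rep}(\tilde h)$ of such a lift for each $\tilde h$, and set, for $\tilde h\in\widetilde{\Hist}_i(\Class{v_0})$,
$$\tau_i(\tilde h)\;:=\;\Class{\sigma_i(\mathrm{rep}(\tilde h))}_\sim.$$
As $\sim_q$ maps edges to edges, $\tau_i(\tilde h)\in\Succ(\Last(\tilde h))$, so $\tau=(\tau_i)_{i\in\Pi}$ is a strategy profile in $\initQ$. Crucially, $\mathrm{rep}$ need not be prefix-compatible; handling that is exactly what uniformity of $\sigma$ buys us.

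The heart of the argument is an \emph{outcome-transfer} claim: for every $\tilde h\in\widetilde{\Hist}(\Class{v_0})$ and every lift $h$ of $\tilde h$ (with $v=\Last(h)$, $\tilde v=\Last(\tilde h)$), one has $\Class{(\outcome{\sigma_{\restriction h}}{v})_n}_\sim=(\outcome{\tau_{\restriction\tilde h}}{\tilde v})_n$ for all $n$. I would prove this by induction on $n$. At a vertex controlled by Player~$j$, the next vertex of $\outcome{\tau_{\restriction\tilde h}}{\tilde v}$ equals $\tau_j$ of the current quotient history, i.e.\ $\Class{\sigma_j(\mathrm{rep}(\cdot))}_\sim$; the history $\mathrm{rep}(\cdot)$ and the current history of $\outcome{\sigma_{\restriction h}}{v}$ are two lifts of the same quotient history, hence $\sim$-related vertex by vertex, so by uniformity of $\sigma_j$ their $\sigma_j$-images are $\sim$-related, which is the desired equality of classes. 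Taking $\tilde h=\Class{v_0}$ and $h=v_0$ gives $\outcome{\sigma}{v_0}\sim_q\outcome{\tau}{\Class{v_0}}$, and since $\sim$ respects the gain functions, $\tilde{\Gain}(\outcome{\tau}{\Class{v_0}})=\Gain(\outcome{\sigma}{v_0})=p$.

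Finally I would check that $\tau$ is an SPE. Suppose not: there are $\tilde h\tilde v\in\widetilde{\Hist}(\Class{v_0})$, a player~$i$ and a strategy $\tau'_i$ in $(\tilde{\mathcal{G}}_{\restriction\tilde h},\tilde v)$ with $\tilde{\Gain}_i(\tilde h\,\outcome{\tau_{\restriction\tilde h}}{\tilde v})<\tilde{\Gain}_i(\tilde h\,\outcome{\tau'_i,\tau_{-i\restriction\tilde h}}{\tilde v})$. Put $hv=\mathrm{rep}(\tilde h\tilde v)$ and build a deviation $\sigma'_i$ in $(\mathcal{G}_{\restriction h},v)$ that copies $\tau'_i$ along the quotient: for $\eta\in\Hist_i(v)$, use the simulation of $\tilde T$ by $T$ to pick $\sigma'_i(\eta)$ with $\Last(\eta)\to\sigma'_i(\eta)$ and $\Class{\sigma'_i(\eta)}_\sim=\tau'_i(\Class{\eta_0}_\sim\cdots\Class{\eta_{|\eta|-1}}_\sim)$. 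An induction parallel to the outcome-transfer claim — Player~$i$'s moves handled by the definition of $\sigma'_i$, the other players' moves by uniformity of $\sigma$ and the definition of $\tau$ — yields $h\,\outcome{\sigma'_i,\sigma_{-i\restriction h}}{v}\sim_q\tilde h\,\outcome{\tau'_i,\tau_{-i\restriction\tilde h}}{\tilde v}$. Combining this, the outcome-transfer claim applied to the undeviated profiles, and the fact that $\sim$ respects the gain functions, the strict inequality above transfers to $\Gain_i(h\,\outcome{\sigma_{\restriction h}}{v})<\Gain_i(h\,\outcome{\sigma'_i,\sigma_{-i\restriction h}}{v})$, contradicting that $\sigma$ is an SPE in $(\mathcal{G},v_0)$.

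The main obstacle is purely in the bookkeeping: correctly tracking which (quotient) histories are $\sim_q$-related and invoking uniformity at the right places throughout the three inductions. Each induction is of the same mechanical flavour as those in the proof of Proposition~\ref{prop:uniformSPE}, but shorter, since here no witness needs to be maintained along the outcome.
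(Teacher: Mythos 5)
Your proposal is correct and follows essentially the same route as the paper's proof: uniformize $\sigma$ via Proposition~\ref{prop:uniformSPE}, define $\tau$ by projecting $\sigma$ along chosen lifts of quotient histories (uniformity making the definition independent of the lift up to $\sim_q$, which is the paper's Claim~1), transfer outcomes by induction, and lift any profitable deviation in the quotient back to a profitable deviation against $\sigma$. No substantive difference to report.
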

\begin{proof}
	Let $(\mathcal{G},v_0) = (\Ar, (\Gain_i)_{i\in \pi})$ be a game and $\initQ = ( \tAr, (\tilde{\Gain}_i)_{i \in \Pi})$ its associated quotient game where $\sim$ is a bisimulation equivalence on $(\mathcal{G},v_0)$ which respects the partition and the gain functions. We assume that there exists an SPE $\sigma$ in $(\mathcal{G},v_0)$ such that $\Gain(\outcome{\sigma}{v_0}) = p $ for some $p \in \{0,1\}$. 
	
Without loss of generality, we can assume thanks to Proposition~\ref{prop:uniformSPE} that $\sigma$ is uniform, \emph{i.e.,} for all histories $h$,$h'\in \Hist(v_0)$	such that $\Last(h) \in V_i \Leftrightarrow \Last(h') \in V_i$, $\sigma_i(h) \sim \sigma_i(h')$.\\

Let $\tilde{h} \in \widetilde{\Hist}([v_0])$ be a history in the quotient game, by bisimulation $\sim_q \subseteq V \times \tilde{V}$, there exists $h = h_0\ldots h_n \in \Hist(v_0)$ such that $h \sim_q \tilde{h} = [h_0]\ldots[h_n]$. Let $v \in V$ be the vertex such that $\sigma_i(h) = v$, by assuming that $\Last(h)\in V_i$. We have that $v \sim_q [v]$ and, we define $\tau_i(\tilde{h}) = [v]$.

\fbox{\begin{minipage}{\linewidth-35pt}
\textbf{CLAIM 1:} $\forall \tilde{h} \in \widetilde{\Hist}([v_0])$, $\forall h \in \Hist(v_0)$ such that $h \sim_q \tilde{h}$, if $\Last(h)\in V_i$, $\sigma_i(h) \sim_q \tau_i(\tilde{h})$.\\
\underline{Proof:} Let $\tilde{h}\in \widetilde{\Hist}([v_0])$ and $h \in \Hist(v_0)$ such that $\Last(h)\in V_i$ for some $i\in \Pi$ and $h \sim_q \tilde{h}$. By construction of $\tau$, there exists $h'\in \Hist_i(v_0)$ such that $h' \sim_q \tilde{h}$ and $\tau_i(\tilde{h}) = [\sigma_i(h')]$. If $h \sim_q \tilde{h}$ and $h'\sim_q \tilde{h}$, we have that $h \sim h'$. Thus, by uniformity of $\sigma$, $\sigma_i(h)\sim \sigma_i(h')$ and in particular $[\sigma_i(h)] = [\sigma_i(h')]$. In conclusion, $\sigma_i(h) \sim_q [\sigma_i(h)] = [\sigma_i(h')] = \tau_i(\tilde{h})$. \qed
\end{minipage}}

Let $\rho = \outcome{\sigma}{v_0}$ and $\tilde{\rho} = \outcome{\tau}{\Class{v_0}}$. Let us prove that: $ \forall n \in \mathbb{N}$ $\rho_n \sim_q \tilde{\rho}_n$. Thus, $\rho \sim_q \tilde{\rho}$ and since $\sim$ respects the gain functions, $\Gain(\rho) = \tilde{\Gain}(\tilde{\rho})=p$.

For $n = 0$: $\rho_0= v_0 \sim_q [v_0] = \tilde{\rho}_0$. We assume that this is true for all $n \leq k$ and we prove it remains true for $n = k+1$. By induction hypothesis, we have that $\rho_0\ldots\rho_k \sim_q \tilde{\rho}_0 \ldots \tilde{\rho}_k$. By $\sim_q$, $\rho_k \in V_i$ if and only if $\tilde{\rho}_k \in \tilde{V}_i$. Thus, $\rho_{k+1} = \sigma_i(\rho_0\ldots \rho_k) \sim_q \tau_i(\tilde{\rho}_0 \ldots \tilde{\rho}_k) = \tilde{\rho}_{k+1}$ by Claim 1. \\

To conclude, we have to prove that $\tau$ is an SPE in $\initQ$. Ad absurdum, we assume that there exists $\tilde{h}\tilde{v}\in \widetilde{\Hist}(\Class{v_0})$ 
such that there exists a player $i\in \Pi$ and a profitable deviation $\tau'_i$ of $\tau_{i\restriction \tilde{h}}$ in $(\tilde{\mathcal{G}}_{\restriction \tilde{h}}, \tilde{v})$,
 \emph{i.e.,} \begin{equation} \tGain_i(\tilde{h}\outcome{\tau_{\restriction \tilde{h}}}{\tilde{v}}) <  \tGain_i(\tilde{h}\outcome{\tau'_i, \tau_{-i \restriction \tilde{h}}}{\tilde{v}}). \label{eq:proof11}\end{equation}
 
 By bisimulation $\sim_q$, there exists $hv \in \Hist(v_0)$ such that $hv \sim_q \tilde{h}\tilde{v}$. We prove that Player~$i$ has a profitable deviation of $\sigma_{i\restriction h}$ in $(\mathcal{G}_{\restriction h},v)$. From which a contradiction follows since $\sigma$ has to be an SPE in $\initG$.\\
 
 We build the profitable deviation $\sigma'_i$. Let $p \in \Hist_i(v)$ a history such that $p=vp_1p_2\ldots p_m$ for some $m \in \mathbb{N}$. By bisimulation $\sim_q$, there exists a unique $\tilde{p}$ such that $p = \Class{v}\Class{p_1}\ldots\Class{p_m}$ and thus $p \sim_q \tilde{p}$. Let $r \in \tilde{V}$ be such that $\tau'_i(\tilde{p}) = r$. By bisimulation $\sim_q$, there exists $x \in V$ such that $\rho_m x \sim_q \Class{\rho_m}r$. We define $\sigma'_i(p) = x$.  In particular, $\sigma'_i(p) \sim_q \tau'_i(\tilde{p})$.\\
 
 Let $\rho = \outcome{\sigma'_i, \sigma_{-i \restriction h}}{v} = v \rho_1 \rho_2 \ldots$ and $\tilde{\rho}= \outcome{\tau'_i, \tau_{-i \restriction \tilde{h}}}{\tilde{v}}= \tilde{v}\tilde{\rho}_1 \tilde{\rho}_2 \ldots$. Let us show by induction that for all $n$, $\rho_n \sim_q \tilde{\rho}_n$. It means that $h\rho \sim_q \tilde{h}\tilde{\rho}$ and since $\sim_q$ respects the gain functions, 
 
 \begin{equation}  \Gain_i(h\outcome{\sigma'_i, \sigma_{-i \restriction h}}{v}) = \Gain_i(h\rho) =  \tGain_i(\tilde{h}\tilde{\rho}) = \tGain_i(\tilde{h}\outcome{\tau'_i, \tau_{-i \restriction \tilde{h}}}{\tilde{v}}). \label{eq:proof12}\end{equation}
 
 For $n=0$: $\rho_0= v \sim_q \tilde{v} = \tilde{\rho}_0$. Assume that this property is true for all $n \leq k$ and let us prove it remains true for $n = k+1$.
 
 \begin{itemize}
	 \item \textbf{First case: if $\rho_k \in V_i$}, by IH $\rho_k \sim_q \tilde{\rho}_k$ and thus $\tilde{\rho}_k \in \tilde{V}_i$. It follows that:
	 \begin{align*} \rho_{k+1} &= \sigma'_i(\rho_0 \ldots \rho_k) \\
		 					   &\sim_q \tau'_i(\tilde{\rho}_0\ldots \tilde{\rho}_k)&\text{ by construction of } \sigma_i \text{ and } \rho_0\ldots\rho_k \sim_q \tilde{\rho}_0 \ldots \tilde{\rho}_k \text{ (IH)}\\
							   &= \tilde{\rho}_{k+1}
	\end{align*}
	
	\item \textbf{Seconde case: if $\rho_k \in V_j$ with $(j \neq i)$} then as previously $\tilde{\rho}_k \in \tilde{V}_j$ and we have:
	\begin{align*}
		\rho_{k+1} &= \sigma_{j \restriction h}(\rho_0 \ldots \rho_k)\\
		           &\sim_q \tau_{j \restriction \tilde{h}}(\tilde{\rho}_0 \ldots \tilde{\rho}_k) & \rho_0 \ldots \rho_k \sim_q \tilde{\rho}_0\ldots \tilde{\rho}_k \text{(HI) and Claim 1.}\\
				   &= \tilde{\rho}_{k+1}
	\end{align*}
\end{itemize}

	There is still to prove that \begin{equation} \tGain_i(\tilde{h}\outcome{\tau_{\restriction \tilde{h}}}{\tilde{v}}) = \Gain_i(h\outcome{\sigma_{\restriction h}}{v}). \label{eq:proof13} \end{equation}
	
	By Claim~1, we have that  $ \outcome{\sigma_{\restriction h}}{v} \sim_q \outcome{\tau_{\restriction \tilde{h}}}{\tilde{v}}$ thus $ h\outcome{\sigma_{\restriction h}}{v} \sim_q \tilde{h} \outcome{\tau_{\restriction \tilde{h}}}{\tilde{v}}$. The fact that $\sim_q$ respects the gain functions concludes the reasonment.\\

 By \eqref{eq:proof11},\eqref{eq:proof12} and \eqref{eq:proof13}, we conclude that $\sigma'_i$ is a profitable deviation in $(\mathcal{G}_{\restriction h}, v)$. \qed
 
\end{proof}

\begin{proposition}
	Let $(\mathcal{G},v_0) = (\Ar, (\Gain_i)_{i\in \pi})$ be a game and $\initQ = ( \tAr, (\tilde{\Gain}_i)_{i \in \Pi})$ its associated quotient game where $\sim$ is a bisimulation equivalence on $(\mathcal{G},v_0)$. If $\sim$ respects the partition and the gain functions, we have that: if there exists an SPE $\tau$ in $\initQ$ such that $\tilde{\Gain}(\outcome{\tau}{\Class{v_0}}) = p$  for some $p \in \{0,1\}^{|\Pi|}$ then there exists an SPE $\sigma$ in  $(\mathcal{G},v_0)$ such that $\Gain(\outcome{\sigma}{v_0}) = p $.	
\end{proposition}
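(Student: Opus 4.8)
The plan is to lift the SPE $\tau$ from $\initQ$ back to $(\mathcal{G},v_0)$ along the bisimulation $\sim_q:V\to\tilde V$, $v\mapsto\Class{v}_{\sim}$, between $T$ and $\tilde T$ (given by the lemma recalled in the preliminaries). Unlike the previous proposition (the game-to-quotient direction), no uniformity argument is needed here: every history $h=h_0\ldots h_n\in\Hist(v_0)$ has a \emph{unique} image $\sim_q(h):=\Class{h_0}_{\sim}\ldots\Class{h_n}_{\sim}\in\widetilde{\Hist}(\Class{v_0}_{\sim})$. I would define $\sigma$ directly from $\tau$: for $h\in\Hist_i(v_0)$, the vertex $\tau_i(\sim_q(h))$ is a successor of $\Class{h_n}_{\sim}$ in $\tilde T$ (and $\Class{h_n}_{\sim}\in\tilde V_i$ since $\sim$ respects the partition), so, $\sim_q$ being a bisimulation, this transition of $\tilde T$ is mirrored by a transition $h_n\to v$ of $T$ with $\Class{v}_{\sim}=\tau_i(\sim_q(h))$; I fix one such $v$ (choice function) and set $\sigma_i(h):=v$, which is a legal move since $v\in\Succ(h_n)$. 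By construction $\Class{\sigma_i(h)}_{\sim}=\tau_i(\sim_q(h))$ for every $h\in\Hist_i(v_0)$, and the same identity holds for the restricted profiles in all subgames.

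Writing $\tilde h$ for $\sim_q(h)$, I would next establish, by a routine induction on length, the \emph{transfer property}: if a play or history of $(\mathcal{G}_{\restriction h},v)$ is consistent with the strategies built above from $\tau$, then its $\sim_q$-image is consistent with $\tau_{\restriction\tilde h}$ in $(\tilde{\mathcal{G}}_{\restriction\tilde h},\Class{v}_{\sim})$. Applied to $(\mathcal{G},v_0)$ itself this gives $\outcome{\sigma}{v_0}\sim_q\outcome{\tau}{\Class{v_0}_{\sim}}$, hence $\Gain(\outcome{\sigma}{v_0})=\tilde{\Gain}(\outcome{\tau}{\Class{v_0}_{\sim}})=p$ because $\sim$ respects the gain functions. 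More generally, for every $hv\in\Hist(v_0)$ we get $h\,\outcome{\sigma_{\restriction h}}{v}\sim_q\tilde h\,\outcome{\tau_{\restriction\tilde h}}{\Class{v}_{\sim}}$, so these two plays carry the same gain profile.

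It remains to show $\sigma$ is an SPE; I would argue by contradiction. Suppose $hv\in\Hist(v_0)$, a player $i$ and a strategy $\sigma'_i$ are such that $\sigma'_i$ is a profitable deviation of $\sigma_{i\restriction h}$ in $(\mathcal{G}_{\restriction h},v)$. Put $\tilde h=\sim_q(h)$, $\tilde v=\Class{v}_{\sim}$, $\rho=\outcome{\sigma'_i,\sigma_{-i\restriction h}}{v}$ and $\tilde\rho=\sim_q(\rho)$. Define $\tau'_i$ in $(\tilde{\mathcal{G}}_{\restriction\tilde h},\tilde v)$ to copy the image of $\sigma'_i$ along the single play $\tilde\rho$: set $\tau'_i(\tilde\rho_0\ldots\tilde\rho_k)=\tilde\rho_{k+1}$ whenever $\tilde\rho_0\ldots\tilde\rho_k$ ends in $\tilde V_i$, and let $\tau'_i$ be arbitrary on every other history. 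Using the matching identity for the opponents' $V_j$-moves ($j\neq i$) and the definition of $\tau'_i$ for the $V_i$-moves, an induction shows $\outcome{\tau'_i,\tau_{-i\restriction\tilde h}}{\tilde v}=\tilde\rho$. Since $\sim$ respects the gain functions, $\tilde{\Gain}_i(\tilde h\tilde\rho)=\Gain_i(h\rho)$ and $\tilde{\Gain}_i(\tilde h\,\outcome{\tau_{\restriction\tilde h}}{\tilde v})=\Gain_i(h\,\outcome{\sigma_{\restriction h}}{v})$; together with $\Gain_i(h\,\outcome{\sigma_{\restriction h}}{v})<\Gain_i(h\rho)$ this yields
\[
\tilde{\Gain}_i\bigl(\tilde h\,\outcome{\tau_{\restriction\tilde h}}{\tilde v}\bigr)<\tilde{\Gain}_i\bigl(\tilde h\,\outcome{\tau'_i,\tau_{-i\restriction\tilde h}}{\tilde v}\bigr),
\]
so $\tau'_i$ is a profitable deviation of $\tau_{i\restriction\tilde h}$, contradicting that $\tau$ is an SPE in $\initQ$. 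Hence $\sigma$ is an SPE with gain profile $p$.

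There is no conceptual obstacle: the whole argument consists of the three inductions above (definition of $\sigma$, the transfer property, and the outcome computation for $\tau'_i$) plus careful bookkeeping of the subgame restrictions. The only delicate point is checking that the identity $\Class{\sigma_i(h)}_{\sim}=\tau_i(\sim_q(h))$ is inherited when passing to subgames, and the arbitrary choice of the vertex $v$ defining $\sigma_i(h)$ is harmless. This is exactly why this implication is the easy half of Theorem~\ref{thm:eqSPE}: in the quotient-to-game direction the map on histories is canonical, so no uniformization step is required.
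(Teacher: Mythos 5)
Your proposal is correct and takes essentially the same route as the paper's own proof: you lift $\tau$ to $\sigma$ via a choice function satisfying $\Class{\sigma_i(h)}_{\sim}=\tau_i(\sim_q(h))$, transfer outcomes along $\sim_q$, and refute a profitable deviation $\sigma'_i$ by projecting its outcome to a deviation $\tau'_i$ in the quotient defined along the single play $\sim_q(\rho)$ and arbitrary elsewhere — exactly the paper's construction. The only cosmetic difference is that you prove $\outcome{\tau'_i,\tau_{-i\restriction\tilde h}}{\tilde v}=\sim_q(\rho)$ directly, whereas the paper defines $\tilde\rho$ as that outcome and shows $\rho\sim_q\tilde\rho$ by induction; these are interchangeable.
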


\begin{proof}
	Let $(\mathcal{G},v_0) = (\Ar, (\Gain_i)_{i\in \pi})$ be a game and $\initQ = ( \tAr, (\tilde{\Gain}_i)_{i \in \Pi})$ its associated quotient game where $\sim$ is a bisimulation equivalence on $(\mathcal{G},v_0)$ which respects the partition and the gain functions. We assume that there exists an SPE $\tau$ in $\initQ$ such that $\tGain(\outcome{\tau}{\Class{v_0}}) = p $ for some $p \in \{0,1\}$. \\
	
	Let $h \in \Hist(v_0)$ such that $\Last(h) \in V_i$ for some $i \in \Pi$. Thanks to bisimulation $\sim_q$, there exists a unique $\tilde{h} \in \widetilde{\Hist}_i(\Class{v_0})$ such that $h \sim_q \tilde{h}$ $(\star)$. We have that $\tau_i(\tilde{h}) = \tilde{v}$ for some $\tilde{v}\in \tilde{V}$, thus by $\sim_q$ there exists $v \in V$ such that $hv \sim_q \tilde{h}\tilde{v}$. We define $\sigma_i(h) = v$.

	\fbox{\begin{minipage}{\linewidth-35pt}
	\textbf{CLAIM 2:} \begin{enumerate}
	\item  $\forall h,h' \in \Hist(v_0)$ such that $h \sim h'$: $\sigma_i(h) \sim \sigma_i(h')$ (if $\Last(h) \in V_i$).
	\item  $\forall h \in \Hist(v_0)$, $\forall \tilde{h} \in \widetilde{\Hist}(\Class{v_0})$ such that $h \sim_q \tilde{h}$: $\sigma_i(h) \sim_q \tau_i(\tilde{h})$ (if $\Last(h)\in V_i$).\\
	\end{enumerate}
	\underline{Proof:} \begin{enumerate}\item By $(\star)$, we have that for all $h \sim h' \in \Hist_i(v_0)$ there exists a unique $\tilde{h}$ such that $h \sim_q \tilde{h}$ and $h' \sim_q \tilde{h}$. It follows by construction of $\sigma$ that $\sigma_i(h) \sim_q \tau_i(\tilde{h})$ and $\sigma_i(h') \sim_q \tau_i(\tilde{h})$ and thus $\sigma_i(h) \sim \sigma_i(h')$. It means that $\sigma$ is uniform.
	\item let $h \in \Hist(v_0)$ and $\tilde{h} \in \widetilde{\Hist}(\Class{v_0})$ be two histories such that $\Last(h) \in V_i$ iff $\Last(\tilde{h}) \in \tilde{V}_i$ for some $i \in \Pi$ and such that $h \sim_q \tilde{h}$. By construction of $\sigma$, there exists $\tilde{g}\in \widetilde{\Hist}_i(\Class{v_0})$ such that $h \sim_q \tilde{g}$ and $\sigma_i(h) \sim_q \tau_i(\tilde{g})$. But by $\sim_q$ if $h \sim_q \tilde{g}$ and $h \sim_q \tilde{h}$, then $\tilde{g} = \tilde{h}$. It concludes the proof. \label{proof:proof2it2}
	\end{enumerate}
	 \qed
	\end{minipage}}
	
	By (\ref{proof:proof2it2}) in Claim 2, we have that $\outcome{\sigma}{v_0} \sim_q \outcome{\tau}{\Class{v_0}}$. It follows, due to the fact that $\sim$ respects the gain functions, that $\Gain(\outcome{\sigma}{v_0})= \tGain(\outcome{\tau}{\Class{v_0}}) = p$.\\
	
	Now, we prove that $\sigma$ is an SPE. Ad absurdum, we assume that there exists $hv \in \Hist(v_0)$  , there exists $i \in \Pi$ and there exists $\sigma'_i$ a profitable deviation of $\sigma_{i\restriction h}$ for Player~$i$ in $(\mathcal{G}_{\restriction h},v)$, \emph{i.e.,}
	
	\begin{equation} \Gain_i(h \outcome{\sigma_{\restriction h}}{v}) < \Gain_i(h \outcome{\sigma'_i, \sigma_{-i \restriction h}}{v})\label{eq:proof21}\end{equation}
		
		Let $\tilde{h}\tilde{v} = \Class{h_0}\Class{h_1}\ldots \Class{v}$ with $[h_0]=[v_0]$ we have that $hv \sim_q \tilde{h}\tilde{v}$. By (\ref{proof:proof2it2}) in Claim 2, we have that $h \outcome{\sigma_{\restriction h}}{v} \sim_q \tilde{h}\outcome{\tau_{\restriction \tilde{h}}}{\tilde{v}}$, since $\sim$ respects the gain functions, it follows:

	\begin{equation} \Gain_i(h \outcome{\sigma_{\restriction h}}{v}) = \tGain_i(\tilde{h}\outcome{\tau_{\restriction \tilde{h}}}{\tilde{v}})\label{eq:proof22}\end{equation}

		To obtain the contradiction, we build $\tau'_i$ a profitable deviation of $\tau_{i \restriction \tilde{h}}$ for Player~$i$ in $(\tilde{\mathcal{G}}_{\restriction \tilde{h}}, \tilde{v})$. 
		
		Let $\rho = \outcome{\sigma'_i, \sigma_{i \restriction h}}{v}$, let $\tilde{p} \in \widetilde{\Hist}_i(\tilde{v})$, we define $\tau_i(\tilde{p})$ as follows:

		$ \tau'_i(\tilde{p}) = \begin{cases} [\rho_{n+1}]& \text{ if } \tilde{p} < [\rho_0][\rho_1] \ldots \text{ and } \Last(\tilde{p}) = [\rho_n]\\
											  \text{ some } r \in \Succ(\Last(\tilde{p})) & \text{ otherwise} \end{cases}.$
											  
		Let $\tilde{\rho} = \outcome{\tau'_i, \tau_{-i\restriction \tilde{h}}}{\tilde{v}}$ and let us prove that $\rho \sim_q \tilde{\rho}$, \emph{i.e.,} $\forall n \in \mathbb{N} \rho_n \sim_q \tilde{\rho}_n$. We proceed by induction on $n$.
		
		For $n=0$, $\rho_0= v \sim_q [v] = \tilde{v} = \tilde{\rho}_0$. Let us assume that this property is true for all $n \leq k$ and let us prove it remains true for $n = k+1$.
		
		\begin{itemize}
			\item \textbf{First case: If $\rho_k \in V_i$}, then since $\rho_k \sim_q \tilde{\rho}_k$ by IH, $\tilde{\rho}_k \in \tilde{V}_i$. It follows that:
			
			\begin{align*}
				\rho_{k+1} &\sim_q [\rho_{k+1}] & \text{ by definition of } \sim_q \\
						   &= \tau'_i(\tilde{\rho}_0 \ldots \tilde{\rho}_k) & \text{ By IH, } \rho_0 \ldots \rho_k \sim_q \tilde{\rho}_0 \ldots \tilde{\rho_k} = [\rho_0] \ldots [\rho_k]\\
						   &=\tilde{\rho}_{k+1}
				\end{align*}
				
			\item \textbf{Second case: If $\rho_k \in V_j$} (with $j \neq i$), then as previously $\tilde{\rho}_k \in \tilde{V}_j$. It follows that:
			
			\begin{align*}
				\rho_{k+1} &= \sigma_{j \restriction h} ( \rho_0 \ldots \rho_k) = \sigma_j(h \rho_0 \ldots \rho_k) \\
						   &\sim_q \tau_j(\tilde{h} \tilde{\rho}_0 \ldots \tilde{\rho}_k) & \text{ By IH, } h\rho_0 \ldots \rho_k \sim_q \tilde{h}\tilde{\rho}_0 \ldots \tilde{\rho}_k \text{ and by (\ref{proof:proof2it2}) in Claim 2 }\\
						   &= \tau_{j \restriction \tilde{h}}(\tilde{\rho}_0 \ldots \tilde{\rho}_k) = \tilde{\rho}_{k+1}
			\end{align*}
		\end{itemize}

		Thus, $\rho \sim_q \tilde{\rho}$ and so $h \rho \sim_q \tilde{h} \tilde{\rho}$. Since, $\sim$ respects the gain functions, we can conclude that:
		
	\begin{equation} \Gain_i(h \outcome{\sigma'_i, \sigma_{-i \restriction h}}{v}) = \tGain_i(\tilde{h}\outcome{\tau'_i, \tau_{-i \restriction \tilde{h}}}{\tilde{v}}) \label{eq:proof23}.\end{equation}
		
		By~\eqref{eq:proof21}, \eqref{eq:proof22} and \eqref{eq:proof23}, we can state that:
		
			$$\tGain_i(\tilde{h}\outcome{\tau'_i, \tau_{i\restriction \tilde{h}}}{\tilde{v}}) = \Gain_i(h \outcome{\sigma'_i, \sigma_{-i\restriction h}}{v})
			> \Gain_i(h \outcome{\sigma_{\restriction h}}{v})  = \tGain_i(\tilde{h}\outcome{\tau_{\restriction \tilde{h}}}{\tilde{v}})
			$$ \qed

\end{proof}


\section{Additional material for Section~\ref{section:timedGames}}

In~\cite{brenguier:tel-00827027}, Proposition 6.12 asserts that the value problem for timed games with B\"uchi objectives and only two clocks is EXPTIME-hard. The proof relies on the notion of countdown game~\cite{JurdzinskiLS07} which is known to be EXPTIME-complete. When reading the proof of the latter proposition, one can easily be convinced that it is also proved that the value problem for timed games with reachability objectives and only two clocks is EXPTIME-hard. Indeed, the only accepting state is a deadlock with a self-loop (named $w_\exists$). Moreover, one can also notice that although the results of~\cite{brenguier:tel-00827027} concern concurrent games, the proof of Proposition relies on turn-based games.

The proof of Proposition 6.12 can be slightly modified in order to prove that the constrained existence problem in reachability timed games is EXPTIME-hard with two clocks. The problem in the original proof beeing that Adam does not have a reachability, but a safety objective. Given a countdown game $\mathcal{C}$, we build a reachability timed games by using nearly the same construction as the one presented in the proof of~\cite[Proposition~6.12]{brenguier:tel-00827027}. The difference are the following ones.
\begin{itemize}
\item We replace all the guards $y \ne c_0$ by the guards $y<c_0$.
\item We add a winning state for Adam $w_\forall$. 
\item From every state belonging to Eve, we add a transition to $w_\forall$ with guard $x=0 \wedge y>c_0$.
\end{itemize}
The proposed transformations does not really affect the behaviors of the timed game, in the sense that it still bisimulates closely the countdown game. The only difference is discussed below. In the original encoding, Eve was winning if and only if she is able to reach $w_\exists$. This could happen only when the clock $y$ is equal to $c_0$. As the game is zero-sum, Adam was winning when $w_\exists$ is never reached. In practice, as the timed game of the encoding is strongly non-zeno, in every winning play of Adam, the clock value $y$ eventually overtakes $c_0$. In our new encoding, every winning play of Adam ends up in $w_\forall$.  That is the only difference. This is important, as we can now see the timed game as a reachability time game where both players have a reachability objective. One can be convinced that Eve as a winning strategy (in the original timed game proposed in~\cite{brenguier:tel-00827027}) if and only if there exists an SPE where only Eve achieves her objective (in the variant of the timed game proposed above).

\end{document}